\DeclarePairedDelimiterX{\infdivx}[2]{(}{)}{%
  #1\;\delimsize\|\;#2%
}
\DeclareMathOperator*{\argmax}{arg\,max}
\def\bal{\begin{aligned}}
\def\eal{\end{aligned}}
\def\be{\begin{equation}}
\def\ee{\end{equation}}
\newcommand*\innerprod[2]{\left\langle #1, #2 \right\rangle}
\newtheorem{definition}{Definition}[section]
\newtheorem*{definition*}{Definition}
\newtheorem{lemma}{Lemma}[section]
\newtheorem*{lemma*}{Lemma}
\newtheorem{theorem}{Theorem}[section]
\newtheorem*{theorem*}{Theorem}
\newtheorem{corollary}{Corollary}[section]
\newtheorem*{corollary*}{Corollary}
\theoremstyle{plain}
\newtheorem{remark}{Remark}[section]
\newcommand{\R}[0]{\mathbb{R}}
\newcommand{\A}[0]{\mathcal{A}}
\newcommand{\B}[0]{\mathcal{B}}
\renewcommand{\H}[0]{\mathcal{H}}
\newlist{todolist}{itemize}{2}
\setlist[todolist]{label=$\square$}
\newcommand*\lmax[0]{\mathrm{\lambda_{\max}}}
\newcommand*\da{\mathrm{D}(\mathcal{A})}
\newcommand*\db{\mathrm{D}(\mathcal{B})}
\newcommand*\id{\mathbb{I}}
\newcommand{\edits}[1]{{\color{black}{#1}}}
\title{No-Regret Learning and Equilibrium Computation in Quantum Games
}
\author{Wayne Lin}
\email{wayne\_lin@sutd.edu.sg}
\author{Georgios Piliouras}
\email{georgios@sutd.edu.sg}
\author{Ryann Sim}
\email{ryann\_sim@sutd.edu.sg}
\author{Antonios Varvitsiotis}
\email{antonios@sutd.edu.sg}
\affiliation{Singapore University of Technology and Design, Singapore}
\date{}
\begin{document}

\maketitle

\begin{abstract}

As quantum  processors advance, the emergence of large-scale decentralized systems involving interacting quantum-enabled agents is on the horizon. Recent research efforts have explored quantum versions of Nash and correlated equilibria as solution concepts of strategic quantum interactions, but these approaches did not directly connect to decentralized adaptive setups where agents possess limited information. This paper delves into the dynamics of quantum-enabled agents within decentralized systems that employ no-regret algorithms to update their behaviors over time. Specifically, we investigate two-player quantum zero-sum games and polymatrix quantum zero-sum games, showing that no-regret algorithms converge to separable quantum Nash equilibria in time-average. In the case of general multi-player quantum games, our work leads to a novel solution concept, that of the {separable} quantum coarse correlated equilibria (QCCE), as the convergent outcome of the time-averaged behavior no-regret algorithms, offering a natural solution concept for decentralized quantum systems. 
Finally, we show that computing QCCEs can be formulated as a semidefinite program {and establish the existence of entangled (i.e., non-separable) QCCEs, which are unlearnable via the current paradigm of no-regret learning.}

\end{abstract}

\section{Introduction}

As quantum computing reaches maturity and quantum computing processors become more affordable and scalable, large-scale systems with interacting   quantum-enabled agents will become more commonplace. Quantum games offer a  powerful framework to predict the behavior  and guide the design  of such systems  \cite{gutoski2007toward,bostanci2022quantum,2012QSGT,wu2004new,khan2018quantum}. In a quantum game, agents can process and exchange quantum information, and their utilities are determined by performing  a measurement on a quantum state that is shared among all agents.

A significant portion of quantum game literature studies well-known games such as the Prisoner's Dilemma~\cite{eisert2000quantum} and Battle of the Sexes~\cite{marinatto2000quantum}, aiming to uncover potential advantages of using   quantum strategies when compared to classical ones.  Another significant research avenue centers on identifying suitable solution concepts for quantum games, which correspond  to   system states that exhibit stability against unilateral player deviations and are   collectively referred to as quantum equilibria. In particular, two notions of quantum equilibria have been  studied: quantum Nash equilibria and quantum correlated equilibria~\cite{jain2009parallel, bostanci2022quantum, 2012QSGT, 2013QCE}. Nevertheless, computing  quantum Nash equilibria is intractable \cite{bostanci2022quantum}, casting doubt on their suitability as a viable solution concept. Indeed,  in view of  the   hardness of computing quantum equilibria, how are agents expected to  reach such states?  To make matters worse, the hardness result holds even in settings   where agent's  utilities are known, an unrealistic assumption for large-scale decentralized~systems.
 %


 A more pragmatic setup is to consider that agents interact with each other over a series of rounds and  they have the opportunity to improve their strategies over time based on the outcomes of previous interactions. 
One established method to enable this dynamic learning process is \emph{no-regret learning}, where  agents update their strategies using an algorithm that meets a   natural benchmark; it performs as well as the best fixed strategy in hindsight. This leads to  the following key question which we seek to answer in this paper:

 \begin{center}
\parbox[c]{400pt}{\centering {\em For which classes of quantum games can  agents reach equilibria using no-regret learning? What types of equilibria do they arrive at? }
}
\end{center}


In the realm of classical normal-form games, where strategies are probability simplex vectors that capture classical randomness over a finite set of  actions,  it is well understood that  {\em no-regret learning converges to equilibrium states} \cite{cesa2006prediction,nisan2007algorithmic}. However, the type of equilibrium and notion of convergence depends on the specific setting and underlying applications. Comparatively, the study of no-regret learning in quantum games is in its infancy \cite{jain2009parallel,lotidis2023learning,jain2022matrix,lin2023quantum}. Our main  goal in this work is to   develop distributed  algorithms for learning in quantum games  and explore what types of equilibrium solutions emerge across different game classes.


\paragraph{Model, approach,  and  contributions.}
In this paper we focus on  a model of quantum games which is a natural extension of prior models, while still being amenable to no-regret learning. Formally, we focus on non-interactive quantum games, where
each player $i$ controls a quantum register $\mathcal{H}_i$ and has as their strategy a density matrix $\rho_i \in {\mathrm{D}}(\mathcal{H}_i)$. The joint strategy is given by a state $\rho \in {\mathrm{D}}(\bigotimes_i \mathcal{H}_i)$, and the payoff of the $i$-th player is given by
the expected value of an observable $R_i$ on the joint state, i.e.,
\begin{equation*}
 \label{QG:_multilinear}
\tag{QG}
	u_i(\rho) = \Tr(\rho R_i).
\end{equation*}
A \ref{QG:_multilinear} is called zero-sum if   players' payoffs  add up to zero.  More generally, we  also consider polymatrix~\ref{QG:_multilinear}s, where there are $k$ players and each player is situated at a node within an undirected graph. Every player engages in two-player  \ref{QG:_multilinear}s with each of their neighboring agents, employing a single state $\rho_i \in \mathrm{D}(\mathcal{H}_i)$ to participate in all games with their neighbors, and their individual payoff is  the cumulative payoff  earned across all these games. 


To investigate learning in quantum games, we draw inspiration from insights derived from no-regret learning in classical games. Within this line of research, we single out two important results.  Firstly,  no-external-regret learning gives rise to  decentralized algorithms that converge in the time-average sense to \emph{coarse correlated equilibria} \cite{nisan2007algorithmic} in arbitrary normal-form games. 
 Secondly, no-external-regret learners converge in the time-average sense to \emph{Nash equilibria} in both two-player zero-sum games~\cite{cesa2006prediction,freund1999adaptive} and also polymatrix (globally) zero-sum games~\cite{cai2011minmax,daskalakis2009network}.  
 
All  these results can be unified within the ${\bf \Phi}$-regret framework  \cite{greenwald2003general}.  The benchmark  of no-$\mathbf{\Phi}$-regret arises by  allowing agents to  deviate   from an action  $x$ to  $\phi(x)$, where $\phi$ is an admissible deviation mapping in a family $\mathbf{\Phi}$ of linear deviation maps. A unifying result of noteworthy relevance to our work is that  players using a no-$\mathbf{\Phi}$-regret algorithm  converge to a corresponding notion of $\mathbf{\Phi}$-equilibria in classical normal-form games~\cite{greenwald2003general}.

Our  results in this work show that  all aforementioned results for  no-regret learning in classical normal-form games  carry over to the quantum regime.  Specifically, in this work:
\begin{itemize}
\item We introduce the notion of quantum $\mathbf{\Phi}$-equilibria (\ref{QPhiE}) for any admissible family of quantum deviation maps ${\bf \Phi}$. Notably, 
the well-explored concepts of quantum Nash \eqref{QNE} and quantum correlated equilibria \eqref{QCE} both emerge as specific instances within this broader framework.
\item For any \ref{QG:_multilinear}, we show that no-$\mathbf{\Phi}$-regret learning converges to \ref{QPhiE}. Moreover, we show that   the set of \emph{separable} quantum coarse correlated equilibria (\ref{QCCE})  coincides with the limit points of the time-averaged achieved through  no-external-regret algorithms. \edits{On the other hand, \emph{entangled} \ref{QCCE}s cannot be reached through the current paradigm of learning in games, and we demonstrate that this class of equilibria is not vacuous by constructing examples of entangled \ref{QCCE}s in Appendix \ref{sec:maxent}.}
\item For two-player quantum zero-sum games  and polymatrix quantum zero-sum games, we show that the limit points of no-external-regret algorithms are separable \ref{QNE}s.
\end{itemize}


%

\paragraph{Related work.} Research on no-regret learning in quantum games is  relatively limited. In a pioneering work, \cite{jain2009parallel} focused on  Matrix Multiplicative Weights Update (MMWU),  a matrix extension of the widely-used Multiplicative Weights Update algorithm \cite{arora2012multiplicative}. Their research focused on two-player quantum zero-sum games, demonstrating convergence in a time-average sense to the set of~\ref{QNE}s. Our results   provide an alternative, simpler proof of that result, which   furthermore  holds for any no-external-regret algorithm. 

More recently, \cite{jain2022matrix} and later on \cite{lotidis2023learning} studied   continuous-time variants of MMWU and variants thereof. They demonstrated  a  cyclic behavior known as Poincaré recurrence in the dynamics, a phenomenon reminiscent of classical results indicating that regret minimization alone is insufficient for last-iterate equilibrium convergence, e.g. see~\cite{piliouras2014optimization, boone2019darwin, mertikopoulos2018cycles}.
 Beyond the zero-sum setting, \cite{lin2023quantum} studies the continuous and discrete variants of a linear version of MMWU in quantum potential games, showing that players' utilities strictly increase when using these~algorithms.

%

The particular context of learning within \ref{QG:_multilinear}s  investigated in this study can be regarded as a specific instance of the broader framework of learning in convex games, as outlined in works like \cite{gordon2008no} and \cite{stoltz2007learning}. Consequently, it becomes essential to elucidate the applicability of these general findings to our specific setting.

In particular,  \cite{gordon2008no} studies $\mathbf{\Phi}$-regret minimization in general  convex games and provides  a   template for designing no-$\mathbf{\Phi}$-regret algorithms, which entails that:  %
\begin{enumerate}
    \item The set of transformations $\mathbf{\Phi}$ is a \emph{reproducing kernel Hilbert space} (RKHS).
    \item We have access to an algorithm $\mathcal{A}'$ which computes approximate fixed points of any $\phi\in\mathbf{\Phi}$.
    \item We have access to an algorithm $\mathcal{A}''$ for no-regret learning in the setting where actions correspond to choosing a deviation $\phi \in \mathbf{\Phi}$.
\end{enumerate}

In terms of using this framework for no-regret learning in \ref{QG:_multilinear}s, the  
 most restrictive assumption is the third one.  An an example, in the case where $\mathbf{\Phi}$ is the set of all completely positive trace-preserving maps (i.e.,   linear maps that  transform  valid quantum states to valid quantum states),  the third assumption necessitates the existence of a no-regret algorithm for learning over  the domain of completely positive, trace-preserving (CPTP) maps. 
To the best of our knowledge, such an algorithm is not available, and obtaining one is the focus of ongoing work. 



Finally, \cite{stoltz2007learning} studies no-internal-regret convergence in convex games, however, their algorithm is not practically applicable as the time and space requirements
grow exponentially with the number of timesteps. In contrast, our approach is efficiently computable and in Section \ref{sec:experiments} we complement our theoretical results with related experiments.

\section{Quantum Games, Equilibria and Online Optimization}
\label{sec:2}

In this section, we introduce a broad formulation of quantum games and study non-commutative analogues of classical equilibrium concepts in such games,
before turning our attention to the equilibria we can learn and how to learn them in the subsequent sections.

\textbf{Quantum preliminaries.} 
A $d$-dimensional quantum register is mathematically described as the set of unit vectors  in a $d$-dimensional Hilbert space $\mathcal{H}.$
The \emph{state} of a qudit quantum  register $ \mathcal{H}$ is represented by a \emph{density matrix}, i.e.,  a $d\times d$ Hermitian positive semidefinite matrix with trace equal to $1$. A qudit is the unit of quantum information described by a superposition of $d$ states. If the number of states $d$ is equal to two then it is referred to as a qubit. The state space of a quantum register $\mathcal{H}$ is denoted  by  $\mathrm{D}(\mathcal{H})$.

One mathematical formalism of the process of measuring a  quantum system is the positive-operator-valued measure (POVM),  defined as a set of positive semidefinite operators $\{P_i\}_{i=1}^m$ such that $\sum_{i=1}^mP_i=\mathbbm{1}_\mathcal{H}$, where $\mathbbm{1}_\mathcal{H}$ is the identity matrix on $\mathcal{H}$. If the quantum  register  $\mathcal{H}$ is in a state described by density matrix $\rho\in \mathrm{D}(\mathcal{H})$, upon performing the measurement $\{P_i\}_{i=1}^m$ we get the outcome $i$ with probability $\langle P_i, \rho \rangle$,
where $\langle A, B\rangle = \Tr(A^\dag B)$ is the \emph{Hilbert-Schmidt inner product} defined on  the linear space of Hermitian matrices. A POVM can also be seen as a collection of observables, each corresponding to a Hermitian operator. In this paper we focus on the POVM formalism for quantum measurement, but there are other formalisms in the literature which we defer to Appendix \ref{appsec:prelims} for completeness. 

Given a finite-dimensional Hilbert space $\mathcal{H}=\mathbb{C}^n$, we denote by $\text{L}(\mathcal{H})$ the set of linear operators acting on $\mathcal{H}$, i.e.,   the set of all $n\times n$ complex matrices over $\mathcal{H}$. When two quantum registers with associated spaces $\mathcal{A}$ and $\mathcal{B} $ of dimension $n$ and $m$ respectively are considered as a joint quantum register, the associated state  space is given by the density operators  on the tensor product space, i.e., $D(\mathcal{A}\otimes \mathcal{B})$. A linear operator that maps matrices to matrices, i.e.,  a mapping  $\Theta:\mathrm{L}(\mathcal{B}) \to \mathrm{L}(\mathcal{A})$, is called a {\em super-operator}. The set of admissible super operators is equivalently the set of completely positive and trace preserving (CPTP) maps. The adjoint super-operator $\Theta^\dagger:\mathrm{L}(\mathcal{A}) \to \mathrm{L}(\mathcal{B})$  is uniquely determined by the equation
$    \langle A, \Theta(B)\rangle = \langle \Theta^\dagger(A), B\rangle
$.  A super-operator $\Theta:\mathrm{L}(\mathcal{B})\to\mathrm{L}(\mathcal{A})$ is    {\em positive} if it maps PSD matrices  to PSD matrices. 
There exists a  linear bijection between  matrices $R\in \mathrm{L}(\mathcal{A}\otimes\mathcal{B})$ and super-operators $\Theta:\mathrm{L}(\mathcal{B})\to\mathrm{L}(\mathcal{A})$ known as the {\em Choi-Jamio\l{}kowski isomorphism}. Specifically, for a super-operator $\Theta$  its {\em Choi matrix}~is:
\begin{equation}\label{CJ}
    C_\Theta= \sum_{1\leq i,j\leq m} \Theta(E_{i,j}) \otimes E_{i,j}\in \mathrm{L}(\mathcal{A}\otimes\mathcal{B}),
\end{equation}
where $\{E_{i,j}\}_{i,j=1}^m$ is the standard orthonormal basis of $\mathrm{L}(\mathcal{B}) = \mathbb{C}^{m\times m}$. Conversely, given an operator $R=\sum_{1\le i,j\le m}A_{i,j}\otimes E_{i,j}\in \mathrm{L}(\mathcal{A}\otimes\mathcal{B})$, we can define $\Theta_R:\mathrm{L}(\mathcal{B})\to\mathrm{L}(\mathcal{A})$ by setting $\Theta_R(E_{i,j})=A_{i,j}$ from which it easily follows that $C_{\Theta_R}=R$. Explicitly, we~have
\begin{equation}\label{eqn:superoperator}
    \Theta_R(B) = \mathrm{Tr}_\mathcal{B} (R(\mathbbm{1}_\mathcal{A}\otimes B^\top)),
    \end{equation}
    where the partial trace
    $ \mathrm{Tr}_\mathcal{B}:{\mathrm{L}}(\A \otimes \B)\to {\mathrm{L}}(\A)$
    is the {\em unique} function
 that satisfies:
\begin{equation*}\label{basic:ptrace}
\mathrm{Tr}_\mathcal{B}(A\otimes B)=A\Tr(B), \  \forall A, B.
\end{equation*}
Moreover, the  adjoint map is $\mathrm{Tr}_\mathcal{B}^\dagger(A)=A\otimes \mathbbm{1}_\B$.
Lastly,  a superoperator $\Theta$ is completely positive (i.e., $\mathbbm{1}_m\otimes \Theta$ is positive for all $m\in \mathbb{N}$) if and only if the Choi matrix of $\Theta$ is positive semidefinite. In particular, if the Choi matrix of the super-operator $\Theta$ is PSD, it follows that $\Theta$ is positive. 


Finally, if a state $\rho  \in \mathrm{D}(\bigotimes_i \mathcal{H}_i)$ can be written as a convex combination of product states,  i.e., 
\begin{equation}
\label{def:sep}
    \rho = \sum_j \lambda_j \bigotimes_i \rho_{i,j}
\end{equation} 
\edits{where $\lambda_j \geq 0 \ \forall j$, $\sum_j \lambda_j = 1$, and $\rho_{i,j} \in \mathcal{H}_i \ \forall j$, then it is called \emph{separable}. A separable state can be interpreted as a classical probability distribution over the product states $\rho_j := \bigotimes_i \rho_{i,j}$. A state that is not separable is said to be \emph{entangled}.} 

\subsection{Quantum games}\label{subsec:quantum_games}

In a  \ref{QG:_multilinear}, there are $k$ players and each player $i$ has register $\mathcal{H}_i$ and selects a density matrix $\rho_i \in {\mathrm{D}}(\mathcal{H}_i)$. A joint strategy is given by a joint state $\rho \in \mathrm{D}(\bigotimes_i \mathcal{H}_i)$. 
Each player has an observable $R_i=\sum_{m_i} m_i P_{m_i}$ and thus their utility function is the (multilinear) expected value of the observable $R_i$ on the joint state, i.e.,
\begin{equation}
\label{QG:_multilinear}\tag{QG}
	u_i(\rho) = \Tr(\rho R_i)
\end{equation}
for some Hermitian $R_i \in \text{L}(\otimes_i \mathcal{H}_i)$. We henceforth refer to $R_i$ as player $i$'s utility tensor.


It is useful to note that if $\rho = \rho_i \otimes \rho_{-i}$ for some $i \in [k]$, $\rho_{-i} \in \mathrm{D}(\bigotimes_{i' \neq i} \mathcal{H}_{i'})$, then we can also write the utility \eqref{QG:_multilinear} in the alternative form
\begin{equation}\label{multutility}
    u_i(\rho) = \Tr(\rho R_i) = \innerprod{\rho_i}{\Theta_i(\rho_{-i}^\top)}
\end{equation}
where $\Theta_i: {\mathrm L}(\bigotimes_{i' \neq i} \H_{i'}) \to {\mathrm L}(\H_i)$ and $R_i \in {\mathrm L}(\bigotimes_{i'} \H_{i'}) = {\mathrm L}(\H_i \otimes (\bigotimes_{i' \neq i} \H_{i'}))$ are related via the Choi-Jamio\l{}kowski isomorphism. This is because, by \eqref{eqn:superoperator}, \[
    \innerprod{\rho_i}{\Theta_i(\rho_{-i}^\top)} 
    = \innerprod{\rho_i}{\Tr_\B(R(\mathbbm{1}_\A \otimes \rho_{-i}))}
    = \innerprod{\rho_i \otimes \mathbbm{1}_\B}{R(\mathbbm{1}_\A \otimes \rho_{-i})}
    =\Tr( R(\rho_i \otimes \rho_{-i})).
\]

An important special case of the \ref{QG:_multilinear} framework was introduced by Zhang \cite{2012QSGT} in an attempt to explore whether quantum resources provide advantages in classical games. 
In Zhang's model, the strategies of each player are encoded by a Hilbert space $\H_i$ with an orthonormal basis $\ket{s_i}, s_i\in S_i$ and the probability of strategy profile $s \in \times_i S_i$  is given by $\Tr(\ketbra{s}{s}\rho)$, where the shared state for all players is $\rho\in D(\otimes_i \H_i)$. Consequently the expected utility of the $i$-th player is given by 
\[
u_i(\rho) = \sum_{s\in S} \Tr(\ketbra{s}{s}\rho) u_i(s) = \Tr(\sum_{s\in S} u_i(s) \ketbra{s}{s}\rho),
\]
which is  the form of \ref{QG:_multilinear} when restricted to diagonal   utility tensors, i.e., $R_i = \sum_{s \in S}  u_i(s) \ketbra{s}{s}$. 

Moreover, the \ref{QG:_multilinear} framework also captures the work of \cite{jain2009parallel, ickstadt2022semidefinite}, who consider a related model of non-interactive quantum games wherein players each control a quantum register. They each independently prepare a quantum state in the register they hold, which are subsequently sent to a referee who performs a joint measurement to determine a real payoff to each player.  
\edits{Crucially, the \ref{QG:_multilinear} framework can be seen as the first stage of the more general quantum game framework found in \cite{gutoski2007toward, bostanci2022quantum}. In this broad framework, quantum strategies are defined over $n$ rounds of interaction. Each round is characterized by an input and output space, as well as admissible mappings between rounds which can capture quantum memory. Finally, at the last memory state, a joint measurement is made to determine the payoffs. We focus on a specialization of this framework, where players select a strategy without prior communication or entanglement, and a measurement is performed using the joint state of the players. This allows us to study a closer quantum analogy to the framework of classical, simultaneous \emph{non-cooperative} games.}

In our setting, the extended utility function of \ref{QG:_multilinear}, $u_i(\rho) = \Tr(\rho R_i)$, can be interpreted as the expected utility if the product state (i.e., strategy profile) to be played is separable. 
On the other hand, an entangled state $\rho$ can only be played by a central agent who 
plays on behalf of all the players.
We explore this problem in more detail in Appendix \ref{sec:maxent}.

\subsection{Various notions of quantum equilibria}\label{subsec:qeq}

In prior works in quantum game theory, classical notions of equilibria have been studied in the quantum context, and several classical results have been shown to have non-commutative analogues. In this section, we recall some of these notions and introduce the quantum coarse correlated equilibrium (\ref{QCCE}), which we first write in terms of deviation mappings and later reformulate in terms of partial traces.

A seminal equilibrium concept in classical game theory is the Nash equilibrium.
In our formulation \eqref{QG:_multilinear}, 
we have the notion of an $\epsilon$-quantum Nash equilibrium ($\epsilon$-QNE),
from which players can only make utility gains of $\leq \epsilon$ by deviating. This formulation of the quantum Nash equilibrium has already been studied in several works \cite{bostanci2022quantum,jain2009parallel,ickstadt2022semidefinite} and we repeat it using our notation for completeness.

\begin{definition}
	A product state $\rho = \bigotimes_i \rho_i \in \mathrm{D}(\otimes_i \H_i)$ is a {quantum Nash equilibrium} (QNE) if 
        \begin{equation}
	\label{QNE}\tag{QNE}
		u_i (\rho) 
        \geq 
        u_i (\rho_i' \otimes \rho_{-i} )
        \quad
        \forall \ i \in [k], \; \rho_i' \in \mathrm{D}(\H_i),
	\end{equation}
        where $\rho_{-i} := \bigotimes_{j \neq i} \rho_{j}$. 
        Moreover $\rho$ is called an $\epsilon$-approximate quantum Nash equilibrium ($\epsilon$-QNE) if the inequality is satisfied up to an additive error of $\epsilon$.
\end{definition}

The \emph{exploitability} (see e.g. \cite{johanson2011accelerating}) of player $i$ at state $\rho$ is the maximum utility they can gain by deviating, and is given by
$
    \lambda_{\max}(\Theta_i(\rho_{-i})) - u_i(\rho) \geq 0.
$
This leads to an alternative characterization of \ref{QNE}s as
the product states that have
zero total exploitability, i.e.,
\begin{equation}
    \rho \text{ is a QNE}
    \Longleftrightarrow
    \sum_{i=1}^k \Big(\lambda_{\max}(\Theta_i(\rho_{-i})) - u_i(\rho) \Big)
    =
    0.
\end{equation}


Clearly, \ref{QNE}s are states which are stable under unilateral player deviations. However, when we consider non-product states, then we need to formulate a better way to capture families of unilateral deviations permitted within the quantum mechanics framework.  These deviations are captured by quantum channels and mathematically formalized by completely positive, trace-preserving (CPTP) maps, which are the family of linear mappings from density matrices to density matrices. This leads to the following~definition: 



\begin{definition}
     Consider a family of CPTP maps $\mathbf{\Phi}=\{\Phi_i\}_{i=1}^k$. A state $\rho\in \mathrm{D}(\otimes_i \H_i)$ is called a quantum $\mathbf{\Phi}$-equilibrium if
	\begin{equation}
	\label{QPhiE}\tag{Q$\mathbf{\Phi}$E}
		u_i(\rho) \geq u_i((\phi_i \otimes \id_{-i})(\rho)) \quad  \forall \ i, \ \phi_i\in \Phi_i.
	\end{equation}
	Moreover  $\rho$ is an $\epsilon$-\emph{quantum $\mathbf{\Phi}$-equilibrium} ($\epsilon$-\ref{QPhiE}) if the inequality is satisfied up to an additive error of $\epsilon$.
\end{definition}

Specializing this formulation and allowing for all possible CPTP maps, we arrive at the notion of quantum correlated equilibria (\ref{QCE}), first defined in \cite{2012QSGT} and analyzed further in \cite{2013QCE}. Concretely, we have~that:


\medskip
\begin{definition}
    A state $\rho\in \mathrm{D}(\otimes_i \H_i)$ is called a quantum correlated equilibrium if
	\begin{equation}
	\label{QCE}\tag{QCE}
		u_i(\rho) \geq u_i((\phi_i \otimes \id_{-i})(\rho))
	\end{equation}
	where $\phi_i : \mathrm{L}(\H_i) \rightarrow \mathrm{L}(\H_i)$ is a CPTP map on player $i$'s subsystem. Moreover  $\rho$ is an $\epsilon$-\emph{quantum correlated equilibrium} ($\epsilon$-QCE) if the inequality is satisfied up to an additive error of $\epsilon$.
\end{definition}

As a second instantiation of $\mathbf{\Phi}$-equilibria, we consider the set of "constant maps", which in our setting corresponds to replacement channels. This leads to a new notion of quantum equilibria, described below:



\begin{definition}
        A state $\rho\in \mathrm{D}(\otimes_i \H_i)$  is called a {quantum coarse correlated equilibrium} if 
 \begin{equation}\label{QCCE}
    u_i (\rho) \geq u_i ((\phi_i \otimes \id_{-i})(\rho))\tag{QCCE}
\end{equation}
for all players $ i \in [k]$ and all replacement channels 
\begin{equation}\label{rchannel}
\phi_i: \mathrm{L}(\H_i) \rightarrow \mathrm{L}(\H_i), \; X \mapsto \Tr(X) \rho_i', \quad\mathrm{where }\ \rho_i' \in \mathrm{D}(\H_i).
\end{equation}

\end{definition}

Equivalently, we can express the \ref{QCCE} definition in terms of partial traces. In particular, $\rho$ is a \ref{QCCE}~if
    \begin{equation}
	\label{eq:dev-QCCE}
		u_i (\rho) \geq u_i (\rho_i' \otimes \Tr_i \rho)
	\end{equation}
	for all players $ i \in [k]$ and $ \rho_i' \in\mathrm{D}(\H_i)$, where $\Tr_i : \mathrm{L}(\bigotimes_{i'} \H_{i'}) \rightarrow \mathrm{L}(\bigotimes_{i' \neq i} \H_{i'})$ is the partial trace with respect to player $i$'s subsystem.  Moreover, $\rho$ is  an $\epsilon$-approximate {quantum coarse correlated equilibrium} ($\epsilon$-QCCE) if the inequality is satisfied up to an additive error of $\epsilon$.
Finally, if a QCCE $\rho$ is a separable state (i.e., it can be expressed as a convex combination of product states), we call it a separable \ref{QCCE}. A proof of equivalence between \ref{QCCE} and \eqref{eq:dev-QCCE} can be found in Lemma \ref{lem:QCCE_equiv} of the Appendix.

By definition, a product \ref{QPhiE} is a \ref{QNE}.
Thus, the space of states considered matters greatly in our equilibrium definitions. For \ref{QG:_multilinear}s it turns out that, unlike the classical case, there is a space of states of interest between that \edits{the narrow class of product states and the broad class of (possibly entangled) general states, which is the space of separable states \eqref{def:sep}. Separable equilibria are an interesting class of equilibria to consider not only because of the significance of separable states in quantum theory, but also because separable states are the set of states reachable by the current paradigm of no-regret learning, in which a product state is played in each round of play and equilibria are obtained by taking a time average---i.e., a convex combination.

Nevertheless, entangled (i.e., non-separable) equilibria can exist. In particular, in  we provide explicit constructions of maximally-entangled \ref{QCCE}s in Appendix \ref{sec:maxent}.
}

\paragraph{Spectrahedral characterization of \ref{QCCE}s.}
Analogous to the classical setting where the set of CCEs can be described as the feasible space of a linear program \cite{roughgarden2016twenty}, the set of \ref{QCCE}s of a game can be described as the feasible space of a semidefinite program (SDP). Suppose that, for each $i \in [k]$, $\Theta_i: {\mathrm L}(\bigotimes_{i' \neq i} \H_{i'}) \to {\mathrm L}(\H_i)$ and $R_i \in {\mathrm L}(\bigotimes_{i'} \H_{i'}) = {\mathrm L}(\H_i \otimes (\bigotimes_{i' \neq i} \H_{i'}))$ are related via the Choi-Jamio\l{}kowski isomorphism. Then a density matrix $\rho^*$ is a \ref{QCCE} if and only~if

\begin{align*}
    &u_i (\rho^*) \geq u_i (\rho_i' \otimes \Tr_i \rho^*)
	&\forall i \in [k], \rho_i' \in \H_i \\
    \Leftrightarrow&
    \Tr(R_i \rho^*) \geq \Tr(R_i(\rho_i' \otimes \Tr_i \rho^*))
	&\forall i \in [k], \rho_i' \in \H_i \\
    \Leftrightarrow&
    \Tr(R_i \rho^*) \geq \max_{\rho_i' \in \H_i}  \innerprod{\rho_i'}{\Theta_i({(\Tr_i \rho^*)}^\top)}
	&\forall i \in [k] \\
    \Leftrightarrow&
	\Tr(R_i \rho^*) \geq \lmax(\Theta_i({(\Tr_i \rho^*)}^\top))
	&\forall i \in [k] \\
    \Leftrightarrow&
 	\Tr(R_i \rho^*)I_i - \Theta_i({(\Tr_i \rho^*)}^\top) \succeq 0
	&\forall i \in [k], 
\end{align*}


so
\[
	\textrm{\ref{QCCE}s} = \{ \rho^* \in \mathrm{\H}: \Tr(R_i \rho^*)I - \Theta_i({(\Tr_i \rho^*)}^\top) \succeq 0 \,\ \ 
	\forall i \in [k],
	\;
	\Tr \rho^* = 1,
	\;
	\rho^* \succeq 0
	\}.
\]

These conic inequality constraints can be combined into a single, block-diagonal linear matrix inequality in terms of the entries of $\rho^*$, giving us an SDP characterization of the set of \ref{QCCE}s of a given game. Hence, the set of \ref{QCCE}s is a spectrahedron, mirroring the classical result that the set of CCEs is a polyhedron.


\paragraph{$\bf{\Phi}$-equilibria in classical games.}
To give context for quantum $\mathbf{\Phi}$-equilibria to readers who may be unfamiliar with with the concept, we shall express well-known classical equilibria in the $\mathbf{\Phi}$-equilibria framework.
A  normal-form game  consists of a set of players ${\cal N}=\{1,\ldots,k\}$ where player $i$ may select from a finite set of actions or pure strategies~${S}_i$. Additionally, each player has a payoff function $u_i: {S}\equiv 
\prod_{i} 
{S}_i \to \mathbb{R}$ assigned over pure strategy profiles $s=(s_1,\ldots, s_k)$. A 
joint strategy
$p\in \Delta(\times_iS_i)$ is a probability distribution over the 
 space $\times_i S_i$ of pure strategy profiles, where  $p(s_1,\ldots,s_k)$ is the probability the system is in the pure state $(s_1,\ldots,s_k)$.  The expected payoff of  the $i$-th player 
 given that the
joint strategy
$p$ is played is given by $u_i(p)=\sum_{s_i\in S_i: i \in {\cal N}}p(s_1,\ldots,s_k)u_i(s_1,\ldots, s_k),$

The analysis of normal-form games typically boils down to equilibrium analysis: studying what the players of the game eventually fall to as the best strategy to employ. The  most famous form of equilibrium is the Nash equilibrium \cite{nash1951games}, which is however intractable to compute. 
Several alternative equilibrium concepts have been proposed, namely correlated equilibria (CE) \cite{aumann1974subjectivity} and coarse correlated equilibria (CCE) \cite{moulin1978strategically}.
All three of these equilibrium types  can be effectively unified and examined through the lens of $\mathbf{\Phi}$-equilibria~\cite{greenwald2003general}. 

Formally, let $\Phi_i$ be a family of deviations 
  for   agent $i$, i.e.,  for any $\phi_i\in \Phi_i$ we have that $\phi_i(\Delta(S_i))\subseteq \Delta(S_i)$, so for each $s'_i\in S_i$ the vector $\phi(s'_i)$ is  a distribution.  Then,  for any joint strategy  $p\in \Delta(\times_iS_i)$ and   $\phi_i\in \Phi_i$,  we define a new joint strategy   $(\phi_i\otimes \mathbb{I}_{-i})(p)$ that assigns to the strategy profile $(s_i, s_{-i})$ the probability
$\sum_{s'_i}p(s'_i, s_{-i}){\rm Prob}(s'_i\overset{\phi_i}{\to}s_i)$.
Setting ${\bf \Phi}=\{\Phi_i\}_{i=1}^k$,  the joint  strategy     $p$ is called  a ${\bf \Phi}$-equilibrium if 
\begin{equation}\tag{$\mathbf{\Phi}$-equilibrium} 
u_i(p)\geq u_i((\phi_i\otimes \mathbb{I}_{-i})(p)) \quad \forall \ i, \ \phi_i\in \Phi_i,
\end{equation} 
or explicitly:
\begin{equation}\label{explicit}
u_i(p)\ge \sum_{s_i, s_{-i}}u_i(s_i,s_{-i})\sum_{s'_i}p(s'_i,s_{-i}){\rm Prob}(s'_i\overset{\phi_i}{\to}s_i).
\end{equation}


Nash, correlated, and coarse correlated equilibria can all be seen as specific instances of ${\bf \Phi}$-equilibria through suitable choice of deviation mappings.  Concretely, correlated equilibria correspond to the case
where permissible deviations are   linear maps  $\phi_i$ that map distributions to distributions, i.e.,  $\phi_i(\Delta(S_i))\subseteq  \Delta(S_i)$. 
 By linearity, any  such map   can be written as $\phi_i(x)=A_ix$ , and since $\phi_i$ preserves the simplex  $A_i$ is entrywise nonnegative and column stochastic.
 On the other hand, coarse correlated equilibria correspond to the case where  the allowable deviations are the constant maps, i.e., the map  $ \Delta(S_i)$ to a single point $x_i  \in \Delta(S_i)$.  Finally, a joint strategy $p$ is a Nash equilibrium iff it is a CCE and a product distribution. 

\subsection{No-$\mathbf{\Phi}$-regret learning in quantum games}\label{sec:learningequilibriaquantum}

The concept of regret serves as a well-established measure in assessing the performance of online algorithms~\cite{cesa2006prediction,hazan2016introduction}. In the ensuing discussion, we introduce the notion of ${\bf \Phi}$-regret within the context of online linear optimization over the set of density matrices.

Let's consider an algorithm $\mathbf{A}$ that generates a sequence of iterates $\rho^t\in D(\mathcal{H})$. The ${\bf \Phi}$-regret benchmark compares  the cumulative utility achieved by the trajectory $\{\rho^t\}_{t=0}^T$ with  the best attainable utility when deviating from $\rho^t$ to $\phi(\rho^t)$ at each step, with $\phi\in \mathbf{\Phi}$ being a fixed deviation map, i.e., 
\begin{equation}
    \textrm{regret}^{T, \mathbf{\Phi}}(\mathbf{A})= \max_{\phi\in \mathbf{\Phi}} \sum_{t=1}^T \innerprod{\phi(\rho^t)}{R}-\sum_{t=1}^T \innerprod{\rho^t}{ R},
\end{equation}
An online algorithm $\mathbf{A}$ is called "no-${\bf \Phi}$-regret" if the normalized ${\bf \Phi}$-regret ${1\over T}   \textrm{regret}^{T,\mathbf{\Phi}}(\mathbf{A})$ tends towards zero as $T$ grows. In line with conventional terminology in the literature, an algorithm is deemed "no-external-regret" (or simply "no-regret," where context permits) when all admissible constant maps are considered as deviations. In this case, these deviations correspond to all replacement channels, as defined in~\eqref{rchannel}.

 The first example of a no-external-regret algorithm for online linear optimization over the set of density matrices is the Matrix Multiplicative Weights Update (MMWU) method, e.g. see \cite[Theorem 10]{kale2007efficient} and Algorithm~\ref{alg:mmwu}. MMWU is a widely applicable no-external-regret algorithm which has found applications for online optimization over the set of density matrices \cite{arora2012multiplicative,kale2007efficient,tsuda2005matrix}. 
Some specific applications include solving semidefinite programs (SDPs) \cite{arora2007combinatorial}, proving QIP=PSPACE \cite{jain2011qip} and spectral sparsification \cite{allen2015spectral}.

Furthermore, in \cite{allen2015spectral} it is shown how MMWU arises naturally as an instance of the Follow-the-Regularized-Leader framework where the regularizer is chosen to be the
entropy function. Based on this, they introduce the novel FTRL$_{\mathrm{exp}}$ framework for a different class of regularizers and provide corresponding regret bounds. It turns out that this class of algorithms is also no-external-regret in our setting.

\begin{algorithm}[h!]
\caption{Matrix Multiplicative Weights Update (MMWU)\label{alg:mmwu}}
\begin{algorithmic}
   \STATE {\bfseries Initialize} weight matrix $W_i^{1} = \mathbbm{1}_{d}$ and stepsize $\eta \leq \frac{1}{2}$.
   \FOR{$t=1\dots T$}
   \STATE {Play using {$\rho^t_i = \frac{W_i^t}{\Tr(W_i^t)}$}}
   \STATE {Update weight matrix $W_i^{t+1} = \exp(\eta \sum_{\tau=1}^t \Theta\left((\rho^{\tau}_{-i})^\top\right) )$}
   \ENDFOR
\end{algorithmic}
\end{algorithm}

 Beyond online optimization, in this work we  consider the  setup where  players in a game interact with each other over a series of rounds and  improve their strategies using a no-regret algorithm $\mathbf{A}$.   Let $\Phi_i$ be a set of deviation mappings for each agent and let ${\bf \Phi}=\{\Phi_i\}_{i=1}^k$. Recalling, that in the \ref{QG:_multilinear}  setup the payoffs are multilinear \eqref{multutility}, each player $i$'s regret for using an online algorithm $\mathbf{A}$ with deviations $\Phi_i$ is 
\[
\textrm{regret}_i^{T,\Phi_i}(\mathbf{A},{\bf \Phi})= \max_{\phi_i\in\Phi_i} \sum_{t=1}^T \innerprod{\phi_i(\rho^t_i)}{\Theta\left((\rho^{t}_{-i})^\top\right)}-\sum_{t=1}^T \innerprod{\rho^t_i}{\Theta\left((\rho^{t}_{-i})^\top\right)}.
\]
Moreover, $\mathbf{A}$ has the no-${\bf \Phi}$-regret property if 
${1\over T}   \textrm{regret}^{T}(\mathbf{A},{\bf \Phi})\to 0$.  In Theorem \ref{thm:_sep-QCCE=timeave}\ref{thm:_QCCEa} we show that for  any set of deviation mappings $\mathbf{\Phi} = \{\Phi_i\}_{i=1}^k$, the limit points of the time-averaged joint history of play $\left\{ \frac{1}{T} \sum_{t=1}^T \bigotimes_i \rho_i^t \right\}_T$ generated by no-$\mathbf{\Phi}$-regret algorithms are separable \ref{QPhiE}.

\paragraph{No $\mathbf{\Phi}$-regret in classical games.} In a classical normal-form game, joint strategies are distributions over pure action  profiles  and agents use deviation mappings   ${\bf \Phi}$ that are linear maps that map distributions to distributions. Moreover, as payoffs are multilinear in a normal-form game, each agent updates their strategy using an algorithm for online optimization over their corresponding probability simplex. 

A pivotal result highlighted in \cite{greenwald2003general} demonstrates the existence of no-${\bf \Phi}$-regret algorithms for any family $\mathbf{\Phi}$ of linear deviation maps. Crucially, there exists an interesting connection between no-$\mathbf{\Phi}$-regret and game theoretic equilibria. Specifically, the time-average behavior of players using a no-$\mathbf{\Phi}$-regret algorithm converges to the corresponding notion of $\mathbf{\Phi}$-equilibria in general normal-form games.

Furthermore, the significance of no-external-regret algorithms is underscored by the folk theorem, which posits that  if all players employ external regret-minimizing algorithms to select their strategies, the players’ time-average behavior converges to the set of  coarse correlated equilibria \cite{roughgarden2016twenty,hart2000simple}. In addition, in the setting of two-player and polymatrix zero-sum games, the product of 
the players' individual time-averaged strategies converges to the set of Nash equilibria \cite{freund1999adaptive, cesa2006prediction, cai2011minmax, daskalakis2009network}.

\section{No-Regret Learning in General Quantum Games}
In this section we study \ref{QG:_multilinear}s from the perspective of no-$\mathbf{\Phi}$-regret learning and  provide an analogue to the classical CCE convergence result for no-external-regret learning.

\noindent
\begin{theorem}[Main Theorem]\label{thm:_sep-QCCE=timeave}
    For any quantum game we have the following:
    \begin{enumerate}[(a)]
        \item \label{thm:_QCCEa} For any deviation mappings $\mathbf{\Phi} = \{\Phi_i\}_{i=1}^k$, the limit points of the time-averaged joint history of play $\left\{ \frac{1}{T} \sum_{t=1}^T \bigotimes_i \rho_i^t \right\}_T$ generated by no-$\mathbf{\Phi}$-regret algorithms are separable \ref{QPhiE}. In particular, if all players update their strategies with a no-$\mathbf{\Phi}$-regret algorithm that guarantees a time-averaged regret of $\leq \epsilon$ after $T$ timesteps, then the time-averaged joint history of play after $T$ timesteps is a separable $\epsilon$-\ref{QPhiE}.
        \item \label{thm:_QCCEb} For any separable \ref{QCCE} $\rho^*$, there exist no-external-regret algorithms for each player so that
        their time-averaged joint history of play $\left\{ \frac{1}{T} \sum_{t=1}^T \bigotimes_i \rho_i^t \right\}_T$ converges to~$\rho^*$.
    \end{enumerate}
\end{theorem}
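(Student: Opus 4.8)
The plan is to derive both parts from the multilinearity of the utilities (equation \eqref{multutility}), together with the partial-trace characterization \eqref{eq:dev-QCCE} of \ref{QCCE}s for part (b).

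For part (a), I would write $\bar\rho_T := \frac1T\sum_{t=1}^T\bigotimes_i\rho_i^t$ for the time-averaged joint history. First I would note that $\bar\rho_T$ is a convex combination of product states, hence separable, and that the set of separable states is closed, so any limit point of $\{\bar\rho_T\}_T$ is automatically separable. Next, fixing a player $i$ and a deviation $\phi_i\in\Phi_i$, I would expand $u_i\big((\phi_i\otimes\id_{-i})(\bar\rho_T)\big)-u_i(\bar\rho_T)$ using linearity of $\phi_i\otimes\id_{-i}$ and multilinearity of $u_i$; by \eqref{multutility} this equals $\frac1T\sum_{t=1}^T\big(\innerprod{\phi_i(\rho_i^t)}{\Theta_i((\rho_{-i}^t)^\top)}-\innerprod{\rho_i^t}{\Theta_i((\rho_{-i}^t)^\top)}\big)$. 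Taking the supremum over $\phi_i\in\Phi_i$ turns this into exactly $\frac1T\,\textrm{regret}_i^{T,\Phi_i}(\mathbf{A},\mathbf{\Phi})$, which is $\le\epsilon$ (respectively $\to 0$) by the no-$\mathbf{\Phi}$-regret hypothesis. Hence $\bar\rho_T$ is a separable $\epsilon$-\ref{QPhiE}; passing along a convergent subsequence and invoking continuity of $u_i$ and of each (fixed, finite-dimensional, hence continuous) map $\phi_i\otimes\id_{-i}$ shows every limit point is a separable \ref{QPhiE}.

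For part (b), I would start from the separability of $\rho^*$ and Carathéodory's theorem to write $\rho^*=\sum_{j=1}^N\lambda_j\bigotimes_i\sigma_{i,j}$ with $\lambda_j>0$, $\sum_j\lambda_j=1$, $\sigma_{i,j}\in\mathrm{D}(\H_i)$. I would then fix a single deterministic schedule $c:\mathbb{N}\to[N]$ common to all players whose empirical frequencies converge to $\lambda$, i.e. $n_j^T/T\to\lambda_j$ with $n_j^T:=|\{t\le T: c(t)=j\}|$, and let player $i$'s algorithm be the oblivious rule $\rho_i^t:=\sigma_{i,c(t)}$. Then $\bar\rho_T=\sum_j (n_j^T/T)\bigotimes_i\sigma_{i,j}\to\rho^*$ by construction, so the time-averaged joint history converges to $\rho^*$. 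It remains to check each player is no-external-regret along this realized trajectory: writing $\rho_{-i}^t=\bigotimes_{i'\ne i}\sigma_{i',c(t)}$ and using \eqref{multutility}, the normalized regret of player $i$ equals $\max_{\rho_i'\in\mathrm{D}(\H_i)}\sum_j (n_j^T/T)\big(u_i(\rho_i'\otimes\bigotimes_{i'\ne i}\sigma_{i',j})-u_i(\bigotimes_{i'}\sigma_{i',j})\big)$. Since $\mathrm{D}(\H_i)$ is compact, the summands are continuous in $\rho_i'$, and $n_j^T/T\to\lambda_j$, a uniform-convergence argument shows this maximum tends to $\max_{\rho_i'}\big(u_i(\rho_i'\otimes\Tr_i\rho^*)-u_i(\rho^*)\big)$, using $\sum_j\lambda_j\bigotimes_{i'\ne i}\sigma_{i',j}=\Tr_i\rho^*$ and linearity of $u_i$; this quantity is $\le 0$ precisely by the \ref{QCCE} condition \eqref{eq:dev-QCCE}. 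Hence $\limsup_T\frac1T\textrm{regret}_i^T\le 0$, so the chosen algorithms are no-external-regret.

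I expect part (a) to be essentially bookkeeping once multilinearity is invoked — the only care needed is the closedness of the separable states and continuity for the limit passage. The real content is part (b): the two points to get right are that an \emph{oblivious} schedule following a separable decomposition of $\rho^*$ is a legitimate choice of online algorithm, and that it is genuinely no-regret. The latter is exactly where the \ref{QCCE} inequality \eqref{eq:dev-QCCE} enters — it forces the asymptotic average regret to be non-positive — with the finite-horizon discrepancy $n_j^T/T-\lambda_j$ washing out via the uniform-convergence argument over the compact set $\mathrm{D}(\H_i)$.
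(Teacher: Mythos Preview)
Your approach matches the paper's almost exactly, for both parts: linearity of $u_i$ and of $\phi_i\otimes\id_{-i}$ to identify the time-averaged utility gap with the average $\mathbf{\Phi}$-regret in (a), and an oblivious schedule with empirical frequencies tracking the weights in a separable decomposition of $\rho^*$ in (b), with the \ref{QCCE} inequality \eqref{eq:dev-QCCE} certifying vanishing regret via a continuity/uniform-convergence argument over the compact $\mathrm{D}(\H_i)$. The paper packages the latter argument through the function $f(\rho)=\max_i\sup_{\rho_i'}[u_i(\rho_i'\otimes\Tr_i\rho)-u_i(\rho)]$ and a lemma that $f$ is continuous, but this is the same content as your uniform-convergence step.

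One point to be careful about in (b): the paper does \emph{not} stop at the oblivious rule. After establishing that the prescribed sequence of play is no-regret on the realized trajectory, it wraps each player's algorithm in a ``follow the schedule if everyone else has; otherwise default to MMWU'' scheme. This guarantees that each player's algorithm is no-external-regret against \emph{arbitrary} opponent play, not merely against the particular opponents who also follow the schedule. Your oblivious rule $\rho_i^t=\sigma_{i,c(t)}$ is only shown to be no-regret on the realized trajectory; against a different opponent sequence it can accumulate linear regret. Whether this is a gap depends on how strictly one reads ``no-external-regret algorithm'' in the theorem statement, but since the fix is a one-line addendum you should include it.
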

In particular, Theorem \ref{thm:_sep-QCCE=timeave} implies that for any quantum game the set of separable \ref{QCCE}s is equal to the limit points of the time-averaged history $\overline{\rho}(T) := \frac{1}{T} \sum_{t=1}^T \rho^t$
of joint play of players using no-external-regret algorithms. \edits{We note that this is the best statement that can be written for learning \ref{QCCE}s since taking the time-averaged history of joint play can only ever yield separable states (due to the fact that at each round a product state is played). On the other hand, because entangled \ref{QCCE}s can exist (see Appendix \ref{sec:maxent}), this means that there exist \ref{QCCE}s that are unlearnable via the current paradigm of no-regret learning.
}

We dedicate the rest of this subsection to proving Theorem \ref{thm:_sep-QCCE=timeave}\edits{, as well deriving an explicit convergence rate to separable $\epsilon$-\ref{QCCE}s when MMWU (Algorithm \ref{alg:mmwu}) is used}.
\medskip
\begin{proof}[Proof of Theorem \ref{thm:_sep-QCCE=timeave}(a)]
    Suppose that after $T$ iterations of running no-$\mathbf{\Phi}$-regret algorithms, every player has $\mathbf{\Phi}$-regret $\leq \epsilon = \epsilon(T)$. Let $\rho^t = \bigotimes_{i=1}^k \rho_i^t$ denote the strategy profile (product distribution) at time $t$, and let $\overline{\rho} = \overline{\rho}(T) := \frac{1}{T} \sum_{t=1}^T \rho^t$ be the time-averaged history of these strategy profiles. (It is thus a classical probability distribution over product distributions.) 

    That player $i$ has $\mathbf{\Phi}$-regret $\leq \epsilon$ means that the player $i$'s realized time-averaged utility is at least the time-averaged utility obtained by applying the channel $\phi_i \otimes \id_{-i}$ for any $\phi_i \in \Phi_i$, i.e.,

\begin{equation}	
\label{eqn:_QCCE_regret_bound}
		\frac{1}{T} \sum_t \Tr(R_i \Big( \bigotimes_i \rho_i^t \Big))
		\geq
		\frac{1}{T} 
		\sum_t
		\Tr(
                R_i
			(\phi_i \otimes \id_{-i})
                    \Big( \bigotimes_i \rho_i^t \Big)
		)
		-
		\epsilon
		\qquad
		\forall \phi_i \in \Phi_i.
\end{equation}
	But player $i$'s realized time-averaged utility is simply the (expected) utility he would get in one round if all the players play according to the time-averaged joint history $\rho^*$ since
	\[
		\frac{1}{T} \sum_t \Tr(R_i \Big( \bigotimes_i \rho_i^t \Big))
		=
		\Tr(
            		R_i
            		\left(
            			\frac{1}{T} \sum_t \bigotimes_i \rho_i^t
            		\right)
            	)	
            	=
            	\Tr(R_i \overline{\rho})
            	=
            	u_i (\overline{\rho}),
	\]
	while on the right-hand side of \eqref{eqn:_QCCE_regret_bound} we have that
	\[
		\frac{1}{T} 
		\sum_t
		\Tr(
                R_i
			(\phi_i \otimes \id_{-i})
                    \Big( \bigotimes_i \rho_i^t \Big)
		)
		=
		\Tr(R_i 
                (\phi_i \otimes \id_{-i})
                \left(
                    \frac{1}{T} \sum_t \bigotimes_i \rho_i^t
                \right)
                )
		=
            u_i((\phi_i \otimes \id_{-i})(\overline{\rho})).
	\]
	
	Thus we have from the regret bound \eqref{eqn:_QCCE_regret_bound} written for each player $i$ that

	\[
		u_i(\overline{\rho}(T))
		\geq
		  u_i((\phi_i \otimes \id_{-i})(\overline{\rho}(T)))
		-
		\epsilon(T)
		\quad
		\forall i \in [k], \; \phi_i \in \Phi_i,
	\]
	and taking the limit of these equations as $T \rightarrow \infty$ we get that
	\[
		\lim_{T \rightarrow \infty} 
		u_i(\overline{\rho}(T))
		\geq
		\lim_{T \rightarrow \infty} 
		u_i((\phi_i \otimes \id_{-i})(\overline{\rho}(T)))
		\quad
		\forall i \in [k], \; \phi_i \in \Phi_i.
	\]
	
	Finally, where $\rho^* := \lim_{m \rightarrow \infty} \overline{\rho}(T_m)$ is any limit point of the no-regret play (here $(T_m)_{m=1}^\infty$ is a subsequence of $\mathbb{N}$ for which the subsequence $(\overline{\rho}(T_m))_{m=1}^\infty$ converges), we have from the continuity of the payoff functions $u_i$ and the quantum channels $\phi_i \otimes \id_{-i}$ for all $i \in [k], \; \phi_i \in \Phi_i$ that
	\[
		u_i(\rho^*)
		=
		\lim_{m \rightarrow \infty} u_i(\overline{\rho}(T_m))
		\geq
		\lim_{m \rightarrow \infty} 
		u_i \left(
			(\phi_i \otimes \id_{-i})(\overline{\rho}(T_m))
		\right)
		=
		u_i((\phi_i \otimes \id_{-i})(\rho^*))
		\quad 
		\forall i \in [k], \; \phi_i \in \Phi_i,
	\]
	i.e. $\rho^*$ is a \ref{QPhiE}. Since the set of separable states is compact and each $\overline{\rho}(T)$ is separable, so the limit point $\rho^*$ is itself separable, and hence a separable \ref{QPhiE}.
\end{proof}

\begin{proof}[Proof of Theorem \ref{thm:_sep-QCCE=timeave}(b)]
	Let $\rho^* = \sum_{j=1}^m \lambda_j \bigotimes_i \rho_{ij}$ be a separable \ref{QCCE}. We can create a sequence of play $\left\{ \bigotimes_{i} \rho_i^t \right\}_t$ that converges to $\rho^*$ in terms of its time-averaged joint history as $t \rightarrow \infty$, i.e.
	\[
		\lim_{T \rightarrow \infty} \overline{\rho}(T) = \rho^*
		\qquad
		\text{where}
		\qquad
		\overline{\rho}(T) := \frac{1}{T} \sum_{t=1}^T \bigotimes_i \rho_i^t,
	\]
	by simply creating a sequence whose terms are in $[m]$ and whose frequencies converge to the distribution $(\lambda_j)_j$, then playing the product distribution $\bigotimes_i \rho_{ij}$ in time $t$ if the $t$-th element of the sequence is $i$.

	Now define the function $f : \bigotimes_i \H_i \rightarrow \R$ such that
	\[
	f(\rho) := \max_i \sup_{\rho_i' \in \H_i} \big[ u_i(\rho_i' \otimes \Tr_i \rho) - u_i (\rho) \big].
	\]
	The function $f$ is continuous by Lemma \ref{lem:_sup_unifcont_cont} since the functions $h_i: \left(\bigotimes_{i'} \H_{i'}\right) \times \H_i$, $h_i(\rho, \rho_i') := u_i(\rho_i' \otimes \Tr_i \rho)$ are continuous $\forall i$, and so we have that $\lim_{T \rightarrow \infty} f(\overline{\rho}(T))$ exists and 
	\[
		\lim_{T \rightarrow \infty} f(\overline{\rho}(T))
		=
		f \left(\lim_{T \rightarrow \infty} \overline{\rho}(T) \right)
		=
		f(\rho^*)
		\leq
		0,
	\]
	with the last inequality due to the fact that $\rho^*$ is a \ref{QCCE}.
	
    But the value of the function $f(\overline{\rho}(T))$ is equal to the maximal time-averaged regret that any player obtains up till time $T$, since
	\begin{equation*}
	\begin{split}
		\textrm{regret}_i^T=&
		\sup_{\rho_i' \in \H_i}
			\frac{1}{T}
			\sum_{t=1}^T
    				\left[
    					u_i \left(\rho_i' \otimes \left( \bigotimes_{i' \neq i} \rho_{i'}^t \right) \right)
					-
					u_i \left( \bigotimes_{i'} \rho_{i'}^t \right)
    				\right]
		\\
		=&
		\sup_{\rho_i' \in \H_i}
			\left[
				u_i \left( 
					\rho_i'
					\otimes
					\left(
						\frac{1}{T}
						\sum_{t=1}^T
							\bigotimes_{i' \neq i} \rho_{i'}^t
					\right)
				\right)
				-
				u_i \left(
					\frac{1}{T}
					\sum_{t=1}^T
						\bigotimes_{i'} \rho_{i'}^t 
				\right)
			\right]
		\\
		=&
		\sup_{\rho_i' \in \H_i}
			 \big[ u_i(\rho_i' \otimes \Tr_i \overline{\rho}(T)) - u_i (\overline{\rho}(T)) \big].
	\end{split}
	\end{equation*}
	
	Thus the maximal regret that any player obtains up till time $T$ converges to a value $\leq 0$ as $T \rightarrow \infty$, i.e. the sequence of play is no-regret.

    Finally, the no-regret algorithms that converge in time-averaged joint history to $\rho^*$ can be defined as follows: for player $i$, for time $t=1$ play $\rho_i^1$ (from the above sequence of play), and for time $t \geq 2$ check if all other players $i'$ have played according to the sequence $(\rho_{i'}^\tau)_\tau$ for all $\tau < t$. If YES, continue playing $\rho_i^t$; if NO, default to a guaranteed no-regret algorithm (e.g., MMWU) for all future time.
\end{proof}

Finally, since we have explicit no-external regret algorithms for learning in quantum games, we can give an explicit convergence rate to \ref{QCCE}s obtained by  all players using one such algorithm (MMWU):

\begin{remark}\label{remark:generalQG}
For a quantum game, if all utilities are in $[-1,1]$ and all players use MMWU with stepsize $\eta = \frac{\epsilon}{2}$ to update their strategies, then for any $\epsilon \leq 2$ 
    their time-averaged joint history of play $\frac{1}{T} \sum_{t=1}^T \bigotimes_i \rho_i^t$ after $T = \frac{4 \ln n}{\epsilon^2}$ steps is a separable~$\epsilon$-\ref{QCCE}. 
\end{remark}

 To see why this is the case, recall from Section \ref{subsec:quantum_games} that the utility of player $i$ at time $t$ can be written as 
    $u_i(\rho^t) = \Tr(\rho^t R_i) = \innerprod{\rho^t_i}{\Theta_i({\rho^t_{-i}}^\top)}$
where $\Theta_i$ and $R_i$ are related via the Choi-Jamio\l{}kowski isomorphism. The assumption that the utilities are in $[-1,1]$ implies that, at each time $t$, the eigenvalues of player $i$'s gain matrix
$
\Theta_i ({\rho^t_{-i}} ^\top)
$ are in $[-1,1]$ for each player $i$.
Then, if player $i$ were to run MMWU with fixed stepsize $\eta \leq 1$ for $T$ timesteps, she would accumulate average regret 
$\leq \eta + \frac{\ln n}{\eta T}$ (see, e.g. \cite{arora2007combinatorial}).
Thus, after $T = \frac{4 \ln n}{\epsilon^2}$ timesteps of running MMWU with fixed stepsize $\eta = \frac{\epsilon}{2}$, she is guaranteed to have average regret $\leq \epsilon$.
    Finally, by Theorem \ref{thm:_sep-QCCE=timeave}\ref{thm:_QCCEa}, the time-averaged joint history of play is a separable $\epsilon$-\ref{QCCE}.

\section{No-Regret Learning in Two-Player Quantum Zero-Sum Games}
Next, we consider the special case of quantum zero-sum games, where the utility is defined such that the sum of all players' payoffs is always zero. In this section, we restrict ourselves to the two-player case in order to present an analogue of a standard classical result - that no-external-regret algorithms go to the set of Nash equilibria in two-player zero-sum games. The main result in this section shows that no-external-regret algorithms converge to the set of \ref{QNE} in two-player quantum zero-sum games.

In a two-player quantum zero-sum game, Alice and Bob play density matrices $\rho\in \mathrm{D}(\A)$ and $\sigma\in \mathrm{D}(\B)$ respectively. 
For notational simplicity, we depart from previous convention and say that Alice's payoff is $u_A(\rho, \sigma) = \innerprod{\rho}{\Theta(\sigma)} = \Tr(R(\rho \otimes \sigma^\top))$, where $\Theta:\mathrm{L}(\B)\rightarrow \mathrm{L}(\A)$ and $R \in \mathrm{L}(\A\otimes\B)$ are related by the Choi-Jamio\l{}kowski isomorphism. By the definition of zero-sum, 
Bob receives payoff $u_B(\rho, \sigma) = -\innerprod{\rho}{\Theta(\sigma)}$. We begin by showing that \ref{QNE}s attain the value of the game in a two-player quantum zero-sum~game.
\begin{theorem}[Quantum Minimax Theorem]\label{thm:quantumminimax}
    Every two-player quantum zero-sum game has a well-defined value, i.e., all \ref{QNE}s attain the same utility
    \begin{equation}
    \label{eqn:maximin}
         v = \max_{\rho\in \da} \min_{\sigma\in \db} \langle \rho, \Theta(\sigma) \rangle =  \min_{\sigma\in \db} \max_{\rho\in \da} \langle \Theta^\dagger(\rho), \sigma \rangle.
    \end{equation}
    Moreover, the set of \ref{QNE}s is the product of two spectrahedra, i.e.,
    \begin{equation}
    \label{eqn:_QNEchar}
        \text{\ref{QNE}s}
        =
        \{\rho \in \da: \Theta^\dagger(\rho) \succeq v I_\B\} \times \{\sigma \in \db: \Theta(\sigma) \preceq v I_\A \}.
    \end{equation}
    
\end{theorem}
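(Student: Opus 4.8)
The plan is to reduce the statement to Sion's minimax theorem and then extract the structural characterization from the saddle-point inequalities. First I would observe that the map $(\rho,\sigma) \mapsto \langle \rho, \Theta(\sigma)\rangle = \Tr(R(\rho\otimes\sigma^\top))$ is bilinear — hence both concave (indeed linear, hence quasi-concave) in $\rho$ for fixed $\sigma$ and convex (linear, hence quasi-convex) in $\sigma$ for fixed $\rho$. The feasible sets $\da = \mathrm{D}(\A)$ and $\db = \mathrm{D}(\B)$ are convex and compact (they are the density matrices, i.e. intersections of the PSD cone with an affine hyperplane). Sion's minimax theorem then immediately gives
\[
  \max_{\rho\in \da} \min_{\sigma\in \db} \langle \rho, \Theta(\sigma)\rangle
  = \min_{\sigma\in \db} \max_{\rho\in \da} \langle \rho, \Theta(\sigma)\rangle,
\]
and compactness guarantees the inner extrema are attained, so a saddle point $(\rho^*,\sigma^*)$ exists. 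The identity $\langle \rho, \Theta(\sigma)\rangle = \langle \Theta^\dagger(\rho),\sigma\rangle$ (definition of the adjoint super-operator) rewrites the right-hand side into the form stated in \eqref{eqn:maximin}. Call the common value $v$.

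Next I would connect saddle points to \ref{QNE}s. For a two-player zero-sum game, unwinding the \ref{QNE} definition with $u_A(\rho,\sigma)=\langle\rho,\Theta(\sigma)\rangle$ and $u_B(\rho,\sigma)=-\langle\rho,\Theta(\sigma)\rangle$ shows that $\rho^*\otimes\sigma^*$ is a \ref{QNE} if and only if $\langle \rho,\Theta(\sigma^*)\rangle \le \langle \rho^*,\Theta(\sigma^*)\rangle \le \langle \rho^*,\Theta(\sigma)\rangle$ for all $\rho\in\da,\sigma\in\db$, i.e. exactly the saddle-point condition. Standard saddle-point arguments then give: (i) every saddle point attains value $v$, so all \ref{QNE}s have the same utility; and (ii) the set of saddle points is a product set — if $(\rho_1,\sigma_1)$ and $(\rho_2,\sigma_2)$ are both saddle points then so are $(\rho_1,\sigma_2)$ and $(\rho_2,\sigma_1)$ — so it decomposes as $\{$optimal $\rho\} \times \{$optimal $\sigma\}$.

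For the spectrahedral characterization \eqref{eqn:_QNEchar}, I would argue that $\rho$ is an optimal (maximin) strategy for Alice iff $\min_{\sigma\in\db}\langle\Theta^\dagger(\rho),\sigma\rangle \ge v$. Since minimizing a linear functional $\langle \Theta^\dagger(\rho),\cdot\rangle$ over the density matrices $\db$ yields the smallest eigenvalue of $\Theta^\dagger(\rho)$, this condition reads $\lambda_{\min}(\Theta^\dagger(\rho)) \ge v$, equivalently $\Theta^\dagger(\rho)\succeq vI_\B$. Symmetrically, $\sigma$ is optimal for Bob iff $\max_{\rho\in\da}\langle\Theta(\sigma),\rho\rangle \le v$, i.e. $\lambda_{\max}(\Theta(\sigma))\le v$, equivalently $\Theta(\sigma)\preceq vI_\A$. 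Intersecting each cone condition with $\mathrm{D}(\A)$, resp. $\mathrm{D}(\B)$, gives two spectrahedra whose product is the \ref{QNE} set.

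The main obstacle is not any single deep step but making sure the pieces fit cleanly: one must verify that the \ref{QNE} inequalities (which a priori only compare $\rho^*\otimes\sigma^*$ against unilateral deviations $\rho'\otimes\sigma^*$ and $\rho^*\otimes\sigma'$) are genuinely equivalent to the two-sided saddle-point inequality — this is immediate here because in the two-player zero-sum case Bob's deviation inequality is just Alice's reversed — and that the extremal value of a linear functional over the density matrix set is indeed the extreme eigenvalue (a one-line spectral fact, using that the extreme points of $\mathrm{D}(\H)$ are rank-one projectors). I would also note explicitly that Sion's theorem, rather than von Neumann's, is what is needed since the strategy sets are not simplices; alternatively one could invoke LP/SDP strong duality, but Sion is the most economical route given that bilinearity and compact-convexity are already in hand.
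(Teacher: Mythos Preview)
Your argument is correct and complete. It differs from the paper's proof mainly in how the spectrahedral characterization \eqref{eqn:_QNEchar} is obtained. You go directly through saddle-point theory: identify \ref{QNE}s with saddle points, use the standard product-structure fact for saddle-point sets, and then read off the two factors via the eigenvalue characterization $\min_{\sigma\in\db}\langle\Theta^\dagger(\rho),\sigma\rangle=\lambda_{\min}(\Theta^\dagger(\rho))$ (and dually for Bob). The paper instead reformulates the max-min as a semidefinite program, writes down its dual, and then argues that any \ref{QNE} yields a primal-dual feasible pair with equal objective value (hence optimal), and conversely that any pair satisfying the cone constraints is a \ref{QNE}. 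Your route is more elementary and self-contained, avoiding the SDP detour; the paper's route has the advantage of making the SDP structure explicit, which ties in with its later spectrahedral description of \ref{QCCE}s and with the broader theme of semidefinite formulations of quantum equilibria. A minor remark: the paper invokes ``Von Neumann's minimax theorem for compact convex sets'' rather than Sion, but this is purely a naming convention for the same result in this bilinear setting.
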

\begin{proof}
    The equivalence of max-min and min-max in Equation \eqref{eqn:maximin} comes as a direct consequence of Von Neumann's minimax theorem which holds for compact convex sets \cite{von1928theorie,accardi2020neumann}. Equation \eqref{eqn:_QNEchar} was proven in \cite{ickstadt2022semidefinite}, but for completeness we provide a simplified proof of it here that also proves along the way that all \ref{QNE}s attain the max-min value.
    First, introducing an auxiliary variable $t$, the max-min term in \eqref{eqn:maximin} can be rewritten as 
    \begin{equation}
\begin{split}
     \max_{\rho, t} \qquad & t \\
    \text{s.t.} \qquad &  \langle \rho, \Theta(\sigma) \rangle \geq t \quad \forall \sigma \in \db \\
    \qquad & \rho \in \da,
\end{split}
\end{equation}
which can in turn be rewritten as



\begin{equation}
\label{primal}
     \max_{\rho, t} \left\{t :
    \Theta^{\dagger}(\rho) \succeq t I_\B, \;
    \rho \in \da
    \right\}.
\end{equation}

The dual of this semidefinite program is given by


\begin{equation}
\label{dual}
     \min_{\sigma, t'} \left\{ t'
    : \Theta(\sigma) \preceq t' I_\A, \;
    \sigma \in \db\right\}.
\end{equation}

For proving one direction of Equation \eqref{eqn:_QNEchar}, suppose that $(\rho^*, \sigma^*)$ is a \ref{QNE}, i.e., $$\lambda_{\max}(\Theta(\sigma^*)) = \innerprod{\rho^*}{\Theta(\sigma^*)} = \lambda_{\min}(\Theta^\dagger(\rho^*)).$$ This implies that $\Theta(\sigma^*) \preceq \innerprod{\rho^*}{\Theta(\sigma^*)} I_\A$ and $\innerprod{\rho^*}{\Theta(\sigma^*)}I_\B \preceq \Theta^\dagger(\rho^*)$, which respectively imply that $(\sigma^*, \innerprod{\rho^*}{\Theta(\sigma^*)})$ is feasible for \eqref{dual} and $(\rho^*, \innerprod{\rho^*}{\Theta(\sigma^*)})$ is feasible for \eqref{primal}. But since these programs are a primal-dual pair and the primal-dual feasible solutions attain the same objective value, we have that the dual-feasible solution $(\sigma^*, \innerprod{\rho^*}{\Theta(\sigma^*)})$ and the primal-feasible solution $(\rho^*, \innerprod{\rho^*}{\Theta(\sigma^*)})$ are in fact optimal for the dual \eqref{dual} and the primal \eqref{primal} respectively. This means that the utility $\innerprod{\rho^*}{\Theta(\sigma^*)}$ is equal to the max-min value $v$, and that $(\rho^*, \sigma^*)$ satisfies $\Theta(\sigma^*) \preceq vI_\A$, $\Theta^\dagger(\rho^*) \succeq vI_\B$. 

To prove the other set inclusion in Equation \eqref{eqn:_QNEchar}, suppose that $(\rho^*, \sigma^*)$ satisfies  
\[
    \Theta^\dagger(\rho^*) \succeq vI_\B, 
    \qquad 
    \Theta(\sigma^*) \preceq vI_\A.
\]
Taking inner product of the first inequality with $\sigma^*$ and the second inequality with $\rho^*$ gives us that $\innerprod{\rho^*}{\Theta(\sigma^*)} \geq v$ and $\innerprod{\rho^*}{\Theta(\sigma^*)} \leq v$ respectively, which together imply that $\innerprod{\rho^*}{\Theta(\sigma^*)} = v$. Substituting this fact back into the two inequalities gives
\[
    \lambda_{\min}(\Theta^\dagger(\rho^*)) \geq \innerprod{\rho^*}{\Theta(\sigma^*)},
    \qquad
    \lambda_{\max}(\Theta(\sigma^*)) \leq \innerprod{\rho^*}{\Theta(\sigma^*)},
\]
i.e., that $(\rho^*, \sigma^*)$ is a \ref{QNE}.
\end{proof}


We can use Theorem \ref{thm:quantumminimax} to show the main result of this section, that no-external-regret dynamics converge to \ref{QNE} in two-player quantum zero-sum games.

\begin{theorem}\label{thm:noregretnash}
    For any two-player quantum zero-sum game, the limit points of the product of time-averaged individual histories of play $\left\{\frac{1}{T}\sum_{t=1}^T \rho_t, \frac{1}{T}\sum_{t=1}^T \sigma_t \right\}_T$ generated by no-external-regret algorithms are separable \ref{QNE}. In particular, if all players update their strategies with a no-external-regret algorithm that guarantees a time-averaged regret of $\leq \epsilon$ after $T$ timesteps, then the time-averaged individual sequences of play after $T$ timesteps is a separable $2\epsilon$-\ref{QNE}.

\end{theorem}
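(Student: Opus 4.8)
The plan is to mimic the classical argument that no-regret dynamics converge to Nash equilibria in two-player zero-sum games, translating the simplex reasoning into the PSD/spectrahedral language provided by the Quantum Minimax Theorem (Theorem \ref{thm:quantumminimax}). Write $\overline{\rho}(T) = \frac{1}{T}\sum_{t=1}^T \rho_t$ and $\overline{\sigma}(T) = \frac{1}{T}\sum_{t=1}^T \sigma_t$ for the time-averaged individual histories, and let $(\rho^*, \sigma^*)$ be any limit point along a subsequence $(T_m)$. Since the sets $\mathrm{D}(\A)$ and $\mathrm{D}(\B)$ are compact, $\rho^*$ and $\sigma^*$ are valid density matrices; moreover each $\overline{\rho}(T)$ (resp. $\overline{\sigma}(T)$) is a convex combination of the product states actually played, hence separable, so $\rho^*$ and $\sigma^*$ are separable — which is what lets us claim a \emph{separable} \ref{QNE}.

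The core of the argument is the regret inequality. Suppose Alice's external regret after $T$ steps is $\leq \epsilon T$ and likewise for Bob. Alice's regret bound says that for every fixed $\rho' \in \mathrm{D}(\A)$,
\[
\frac{1}{T}\sum_{t=1}^T \innerprod{\rho'}{\Theta(\sigma_t)} - \frac{1}{T}\sum_{t=1}^T \innerprod{\rho_t}{\Theta(\sigma_t)} \leq \epsilon,
\]
and by linearity of $\Theta$ and of the inner product the first average equals $\innerprod{\rho'}{\Theta(\overline{\sigma}(T))}$. Writing $\overline{u}(T) := \frac{1}{T}\sum_t \innerprod{\rho_t}{\Theta(\sigma_t)}$ for the realized average utility to Alice, this gives $\max_{\rho'} \innerprod{\rho'}{\Theta(\overline{\sigma}(T))} \leq \overline{u}(T) + \epsilon$, i.e. $\lambda_{\max}(\Theta(\overline{\sigma}(T))) \leq \overline{u}(T) + \epsilon$. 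Symmetrically, Bob minimizes $\innerprod{\rho}{\Theta(\sigma)}$, so his regret bound yields $\lambda_{\min}(\Theta^\dagger(\overline{\rho}(T))) \geq \overline{u}(T) - \epsilon$. Comparing these two with the minimax value $v$: on one hand $v = \min_\sigma \lambda_{\max}(\Theta(\sigma)) \leq \lambda_{\max}(\Theta(\overline{\sigma}(T))) \leq \overline{u}(T) + \epsilon$, and on the other $v = \max_\rho \lambda_{\min}(\Theta^\dagger(\rho)) \geq \lambda_{\min}(\Theta^\dagger(\overline{\rho}(T))) \geq \overline{u}(T) - \epsilon$, so $\overline{u}(T) \in [v-\epsilon, v+\epsilon]$. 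Feeding this back in gives $\Theta(\overline{\sigma}(T)) \preceq (v + 2\epsilon) I_\A$ and $\Theta^\dagger(\overline{\rho}(T)) \succeq (v - 2\epsilon) I_\B$, which by the spectrahedral characterization \eqref{eqn:_QNEchar} (relaxed by $2\epsilon$) is exactly the statement that $(\overline{\rho}(T), \overline{\sigma}(T))$ is a separable $2\epsilon$-\ref{QNE}. This proves the quantitative claim.

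For the limit-point statement, take $\epsilon = \epsilon(T) \to 0$ (guaranteed by the no-regret property) and pass to the limit along $(T_m)$. Using continuity of $\Theta$, $\Theta^\dagger$, and of $\lambda_{\max}, \lambda_{\min}$ (as functions on Hermitian matrices), the inequalities $\lambda_{\max}(\Theta(\overline{\sigma}(T_m))) \leq v + 2\epsilon(T_m)$ and $\lambda_{\min}(\Theta^\dagger(\overline{\rho}(T_m))) \geq v - 2\epsilon(T_m)$ become $\Theta(\sigma^*) \preceq v I_\A$ and $\Theta^\dagger(\rho^*) \succeq v I_\B$ in the limit, so $(\rho^*, \sigma^*)$ lies in the product of spectrahedra \eqref{eqn:_QNEchar} and is therefore a \ref{QNE}; combined with separability from the first paragraph, it is a separable \ref{QNE}. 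I expect the only mild subtlety — not a real obstacle — to be bookkeeping the factor of $2$ correctly and making sure the one-sided chain of inequalities relating $\overline{u}(T)$, $v$, $\lambda_{\max}$ and $\lambda_{\min}$ is written cleanly; everything else is a direct application of Theorem \ref{thm:quantumminimax} and linearity of the payoff in each argument. A secondary point worth stating explicitly is that the realized utility sequences for Alice and Bob are negatives of each other (zero-sum), which is implicitly what ties the two regret bounds to a common value $\overline{u}(T)$.
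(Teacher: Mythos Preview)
Your proof is correct and follows essentially the same sandwich argument as the paper: both use the two players' regret bounds to trap the realized average utility between $\lambda_{\max}(\Theta(\overline{\sigma}))$ and $\lambda_{\min}(\Theta^\dagger(\overline{\rho}))$ (written in the paper as $\max_\rho$ and $\min_\sigma$), then invoke Theorem~\ref{thm:quantumminimax} to conclude the $2\epsilon$-\ref{QNE} property. One minor terminological slip worth cleaning up: $\overline{\rho}(T) \in \mathrm{D}(\A)$ and $\overline{\sigma}(T) \in \mathrm{D}(\B)$ are states on single registers, so ``separable'' does not apply to them individually; the relevant point is simply that the pair defines a product state $\overline{\rho}(T) \otimes \overline{\sigma}(T)$, which is trivially separable (as is any \ref{QNE} by definition).
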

\begin{proof}
    For any no-external-regret algorithm, we can select parameters such that, at time $T$, each player's time-averaged regret is at most $\epsilon$. Consider the regret for the sequences of play of Alice (denoted by $\rho$) and Bob (denoted by $\sigma$) respectively. Them we have that:
    \begin{equation}
    \begin{split}\label{eqn:maximin_inequality}
        \min_{\sigma} \frac{1}{T}\sum_{t=1}^T \langle \rho_t, \Theta(\sigma) \rangle +\epsilon 
        \geq \frac{1}{T}\sum_{t=1}^T \langle \rho_t, \Theta(\sigma_t) \rangle \geq \max_{\rho} \frac{1}{T}  \sum_{t=1}^T \langle \rho, \Theta(\sigma_t) \rangle - \epsilon.
    \end{split}
    \end{equation}
    Next, let $\overline{\rho} = \frac{1}{T}\sum_{t=1}^T \rho_t$ and $\overline{\sigma} = \frac{1}{T}\sum_{t=1}^T \sigma_t$
    so that Equation \ref{eqn:maximin_inequality} can be written as
    \begin{align}
        \min_{\sigma} \langle \overline{\rho}, \Theta(\sigma) \rangle +\epsilon &\geq \max_{\rho} \langle \Theta^\dagger(\rho), \overline{\sigma} \rangle - \epsilon \label{eqn:inequalityadjoint}
    \end{align}
    By taking the maximum over $\rho$ for Equation \ref{eqn:inequalityadjoint}, we obtain the following:
    \begin{align*}
        \max_{\rho} \min_{\sigma} \langle \rho, \Theta(\sigma) \rangle &\geq \min_{\sigma} \langle \overline{\rho}, \Theta(\sigma) \rangle\\
        &\geq \max_{\rho} \langle \Theta^\dagger(\rho), \overline{\sigma} \rangle - 2\epsilon\\
         &\geq \min_{\sigma} \max_{\rho} \langle \Theta^\dagger(\rho), \sigma \rangle - 2\epsilon
    \end{align*}
    
    

    By Theorem \ref{thm:quantumminimax}, the left-hand side of the inequality is $v$, the value of the game. Now let us consider Nash strategies, which are strategies for each player that achieve the minimax value regardless of the other player's strategy. Thus, since the time-average value of $\sigma$, $\overline{\sigma}$, satisfies the maximin inequalities above up to a $2\epsilon$ error, it is a $2\epsilon$-approximate Nash strategy for Bob by Theorem \ref{thm:quantumminimax}. A similar argument holds for the case of Alice with $\overline{\rho}$, and thus 
we have that the time average values $(\overline{\rho}, \overline{\sigma})$ are a $2\epsilon$-\ref{QNE} strategy of the zero-sum game, and $\langle\overline{\rho}, \Theta(\overline{\sigma})\rangle$ is the $2\epsilon$-equilibrium value of the game.  

Finally, Theorem \ref{thm:_sep-QCCE=timeave}\ref{thm:_QCCEb}, we have that for any separable \ref{QCCE} there exists no-external-regret algorithms that converge to that \ref{QCCE}. Thereafter, we can take the marginals over the players' joint history of play to obtain a \ref{QNE} of the game.
\end{proof}

We can, with similar reasoning to Remark \ref{remark:generalQG}, give an explicit convergence rate for players using MMWU to update their strategies:

\begin{remark}
    For any two-player quantum zero-sum game, if utilities are in $[-1,1]$ and all players use MMWU with fixed stepsize $\eta = \frac{\epsilon}{4}$ to update their strategies, for any $\epsilon \leq 4$, the product of their time-averaged individual sequences of play $\left(\frac{1}{T}\sum_{t=1}^T \rho_t, \frac{1}{T}\sum_{t=1}^T \sigma_t \right)$ after $T = \frac{16 \ln n}{\epsilon^2}$ steps is an $\epsilon$-\ref{QNE}.
\end{remark}

\section{No-Regret Learning in Polymatrix Quantum Zero-Sum Games}
A key question in the quantum game regime is whether there exist classes of multiplayer games where quantum Nash equilibria are tractable and can be converged to via no-regret learning. As it turns out, in classical polymatrix zero-sum games, \cite{cai2011minmax, daskalakis2009network} show that no-external-regret learning converges in time-average to Nash equilibria. In this section we show an analogous result: in polymatrix quantum zero-sum games, no-external-regret learning converges in time-average to approximate \ref{QNE}.
In order to show this, we first need to define the notion of a polymatrix quantum zero-sum game. 
    \begin{definition}\label{def:polymatrix}
        A polymatrix quantum game $\mathcal{G}$ is a game defined on an undirected graph $(V, E)$ such that the following holds:
        \begin{itemize}
            \item The vertices (or nodes) $V = \{1,\dots,k\}$ represent players, and edges $E$ represent two-player quantum games \eqref{QG:_multilinear} between a pair of players $(i,j)$, where $i\neq j$.
            \item Each player $i\in V$ has register $\mathcal{H}_i$.
            \item For each edge $(i,j)\in E$, we associate a two-player quantum game \eqref{QG:_multilinear} between players $i$ and $j$ where player $i$ has register $\H_i$ and utility tensor $R_{ij}$, while player $j$ has register $\H_j$ and utility tensor $R_{ji}$. Where $\rho_{ij} := \Tr_{-ij} \rho = \Tr_{i}(\Tr_{-j} \rho)) \in \mathrm{D}(\H_i \otimes \H_j)$ is the joint state of the two players' registers, player $i$'s utility from this two-player game is then $u_{ij}(\rho_{ij}) = \Tr(\rho_{ij}R_{ij})$, while player $j$'s utility from this two-player game is $u_{ji}(\rho_{ij}) = \Tr(\rho_{ij}R_{ji})$.
            \item For each joint state $\rho \in {\mathrm{D}}(\bigotimes_i \H_i)$, the total utility attained by player $i \in V$ under $\rho$ is $u_i(\rho) = \sum_{(i, j) \in E} u_{ij}(\rho_{ij}) = \sum_{(i,j) \in E} \Tr(\rho_{ij} R_{ij})$.
            \item Finally, the game $\mathcal{G}$ is called zero-sum if for all joint states $\rho \in {\mathrm{D}}(\bigotimes_i \H_i)$, we have that $\sum_{i\in V} u_i(\rho) = 0$.
            \end{itemize}
    \end{definition}
Note that this definition refers to the zero-sum property of the game in a global sense, as opposed to the stronger notion of \emph{pairwise} zero-sum polymatrix quantum games, the definition of which includes the additional constraint that every two-player edge game is a quantum zero-sum game. 


We next establish a lemma stating that a polymatrix quantum game is also a quantum game in the sense of our earlier definition.

\begin{lemma}\label{lem:polymatrixQG}
    A polymatrix quantum game is also a quantum game in the sense of \eqref{QG:_multilinear}.
\end{lemma}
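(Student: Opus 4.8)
The plan is to produce, for each player $i\in V$, a single Hermitian operator $R_i\in\mathrm{L}(\bigotimes_{i'}\H_{i'})$ with $u_i(\rho)=\Tr(\rho R_i)$ for every joint state $\rho$, since this is exactly the defining property of a \eqref{QG:_multilinear}. The work amounts to ``lifting'' each edge payoff, which depends only on a two-register marginal, to a linear functional of the full state, and then summing over the edges incident to $i$.

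First I would handle a single edge. Fix $(i,j)\in E$; by Definition~\ref{def:polymatrix} the edge game is a \eqref{QG:_multilinear}, so player $i$'s payoff from it is $u_{ij}(\rho_{ij})=\Tr(\rho_{ij}R_{ij})$ with $R_{ij}$ Hermitian on $\H_i\otimes\H_j$ and $\rho_{ij}=\Tr_{-ij}\rho$ the partial trace of $\rho$ onto registers $i$ and $j$. To rewrite this in terms of $\rho$ itself I would use the adjoint of the partial trace, $\Tr_{-ij}^\dagger(X)=X\otimes\mathbbm{1}_{-ij}$, where $\mathbbm{1}_{-ij}$ is the identity on $\bigotimes_{i'\neq i,j}\H_{i'}$ and $X\otimes\mathbbm{1}_{-ij}$ denotes the embedding of $X\in\mathrm{L}(\H_i\otimes\H_j)$ that acts as $X$ on the $i,j$ factors and as the identity elsewhere. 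Using that $R_{ij}$ and $R_{ij}\otimes\mathbbm{1}_{-ij}$ are Hermitian, this yields
\[
\Tr(\rho_{ij}R_{ij})=\innerprod{R_{ij}}{\Tr_{-ij}\rho}=\innerprod{\Tr_{-ij}^\dagger(R_{ij})}{\rho}=\innerprod{R_{ij}\otimes\mathbbm{1}_{-ij}}{\rho}=\Tr\big((R_{ij}\otimes\mathbbm{1}_{-ij})\rho\big).
\]

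Then I would sum over the edges incident to $i$. Setting $R_i:=\sum_{(i,j)\in E}R_{ij}\otimes\mathbbm{1}_{-ij}$ and using linearity of the trace,
\[
u_i(\rho)=\sum_{(i,j)\in E}u_{ij}(\rho_{ij})=\sum_{(i,j)\in E}\Tr\big((R_{ij}\otimes\mathbbm{1}_{-ij})\rho\big)=\Tr(R_i\rho)=\Tr(\rho R_i).
\]
Since each summand $R_{ij}\otimes\mathbbm{1}_{-ij}$ is Hermitian, $R_i$ is a finite sum of Hermitian operators and hence Hermitian, so the tuple $(R_i)_{i\in V}$ realizes $\mathcal{G}$ as a \eqref{QG:_multilinear}.

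There is no genuine obstacle in this argument: the only point that needs care is the tensor-factor bookkeeping, i.e.\ making the embedding $R_{ij}\mapsto R_{ij}\otimes\mathbbm{1}_{-ij}$ and the matching form of the partial-trace adjoint precise once a fixed ordering of the registers has been chosen, but this is entirely routine.
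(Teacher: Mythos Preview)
Your proof is correct and follows exactly the same route as the paper: you lift each edge operator $R_{ij}$ to $R_{ij}\otimes\mathbbm{1}_{-ij}$ via the adjoint of the partial trace and then sum over edges incident to $i$ to obtain $R_i=\sum_{(i,j)\in E}R_{ij}\otimes\mathbbm{1}_{-ij}$, which is precisely the paper's construction. Your additional remark that $R_i$ is Hermitian is a small nicety the paper leaves implicit.
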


\begin{proof}
The utility of player $i$ can be expressed as 
    \begin{equation*}
        u_i(\rho) =
        \sum_{j: (i,j) \in E}
        u_{ij}(\rho_{ij}) =
        \sum_{j: (i,j) \in E} 
        \Tr(\rho_{ij} R_{ij}). 
    \end{equation*}
    Subsequently, 
    \begin{equation*}
    \begin{split}
        \sum_{j: (i,j) \in E} 
        \Tr(\rho_{ij} R_{ij}) 
        = \sum_{j: (i,j) \in E} 
        \Tr((\Tr_{-ij}\rho) R_{ij}) =
        \sum_{j: (i,j) \in E}
        \innerprod{\Tr_{-ij} \rho}{R_{ij}},
        \end{split}
    \end{equation*}
    and finally
        \begin{equation*}
    \begin{split}
        \sum_{j: (i,j) \in E}
        \innerprod{\Tr_{-ij} \rho}{R_{ij}} =
        \sum_{j: (i,j) \in E}
        \innerprod{\rho}{R_{ij} \otimes I_{-ij}} =
        \Tr(
            \left(
            \sum_{j: (i,j) \in E}
                R_{ij} \otimes I_{-ij}
            \right)
            \rho
        ),
    \end{split}
    \end{equation*}
    so setting $R_i := \sum_{j:(i,j)\in E} R_{ij} \otimes I_{-ij}$ gives $u_i(\rho) = \Tr(R_i \rho) \; \forall i$, which fits into the \ref{QG:_multilinear} formulation.
\end{proof}

Next, we prove a property connecting \ref{QCCE}s and \ref{QNE}s in the class of polymatrix quantum zero-sum games. 

\begin{lemma}\label{lem:polymatrixmarginal}
    Let $\mathcal{G}$ be a polymatrix quantum zero-sum game. For any joint state $\rho$ that is a \ref{QCCE}, its \emph{marginalized state} $\Hat{\rho}$ defined by
    \begin{equation*}
        \hat{\rho} = \bigotimes_{i \in [n]} \hat{\rho}_i, \qquad
        \hat{\rho}_i = \Tr_{-i} \rho
    \end{equation*}
    is a \ref{QNE} of $\mathcal{G}$. 
\end{lemma}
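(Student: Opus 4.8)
The plan is to transplant the classical argument for (globally) zero-sum polymatrix games (in the spirit of Cai--Daskalakis) to the quantum setting, exploiting the fact that in a polymatrix quantum game each player's utility depends on the joint state only through its two-body marginals. The goal is to show that $\hat\rho$ has zero total exploitability, which by the characterization of \ref{QNE}s recalled earlier (a product state is a \ref{QNE} iff the sum over players of $\lambda_{\max}(\Theta_i(\cdot)) - u_i(\cdot)$ vanishes) is equivalent to $\hat\rho$ being a \ref{QNE}.

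The key structural observation comes first: by Lemma \ref{lem:polymatrixQG} (really its proof), for \emph{any} joint state $\sigma$ one has $u_i(\sigma) = \sum_{j:(i,j)\in E}\Tr(\sigma_{ij}R_{ij})$ with $\sigma_{ij} = \Tr_{-ij}\sigma$. Writing $\hat\rho_{-i} := \bigotimes_{l\neq i}\hat\rho_l$, I would apply this to $\sigma = \rho_i'\otimes\Tr_i\rho$ and to $\sigma' = \rho_i'\otimes\hat\rho_{-i}$: a short partial-trace computation (tracing out register $i$ and then everything outside $\{i,j\}$ leaves the $j$-marginal $\Tr_{-j}\rho = \hat\rho_j$ of $\rho$ unchanged) shows that $\sigma$ and $\sigma'$ have the same two-body marginal $\rho_i'\otimes\hat\rho_j$ on every edge $(i,j)\in E$, hence $u_i(\rho_i'\otimes\Tr_i\rho) = u_i(\rho_i'\otimes\hat\rho_{-i})$ for every $\rho_i'\in\mathrm{D}(\H_i)$; similarly $\hat\rho$ has edge marginals $\hat\rho_i\otimes\hat\rho_j$.

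With this in hand the argument is short. Let $r_i := \sup_{\rho_i'\in\mathrm{D}(\H_i)} u_i(\rho_i'\otimes\hat\rho_{-i}) - u_i(\hat\rho)\geq 0$ be the exploitability of player $i$ at the product state $\hat\rho$, so that $\hat\rho$ is a \ref{QNE} iff $\sum_i r_i = 0$. Using the partial-trace form \eqref{eq:dev-QCCE} of the \ref{QCCE} condition for $\rho$ together with the observation above, $u_i(\rho)\geq u_i(\rho_i'\otimes\Tr_i\rho) = u_i(\rho_i'\otimes\hat\rho_{-i})$ for all $\rho_i'$, whence $u_i(\rho)\geq u_i(\hat\rho) + r_i$, i.e.\ $r_i\leq u_i(\rho) - u_i(\hat\rho)$. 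Summing over all players and invoking the global zero-sum hypothesis ($\sum_i u_i(\rho) = 0$, and likewise $\sum_i u_i(\hat\rho) = 0$ since $\hat\rho$ is a legitimate joint state) gives $\sum_i r_i\leq 0$; combined with $r_i\geq 0$ this forces every $r_i = 0$, so $\hat\rho$ is a \ref{QNE}. The only genuine bookkeeping obstacle I foresee is the partial-trace identity $\Tr_{-ij}(\rho_i'\otimes\Tr_i\rho) = \rho_i'\otimes\Tr_{-j}\rho$; everything else reduces to the exploitability characterization of \ref{QNE}s and the global zero-sum property.
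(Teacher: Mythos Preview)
Your proposal is correct and follows essentially the same approach as the paper: both hinge on the key identity $u_i(\rho_i'\otimes\Tr_i\rho)=u_i(\rho_i'\otimes\hat\rho_{-i})$ (obtained by matching two-body marginals on each edge), then combine the \ref{QCCE} inequalities with the global zero-sum property applied to both $\rho$ and $\hat\rho$. The only cosmetic difference is that the paper fixes one player $i$ and bounds the others via $u_j(\rho)\geq u_j(\hat\rho)$ before summing, whereas you package the same inequalities as $r_i\leq u_i(\rho)-u_i(\hat\rho)$ and sum the exploitabilities directly; these are equivalent rearrangements of the same argument.
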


\begin{proof}
    First note that $\forall i \in [k]$, $\forall \rho_i' \in \mathrm{D}(\H_i)$ we have that
    \begin{equation}
    \label{polymatrix_key}
        u_i(\rho_i' \otimes \hat{\rho}_{-i})
        =
        u_i(\rho_i' \otimes \Tr_i \rho).
    \end{equation}
    This is because if the joint state on all registers is $\rho_i' \otimes \hat{\rho}_{-i}$, then for each $j : (i,j) \in E$ the joint state on player $i$ and $j$'s register is $
    \Tr_{-ij}
    (\rho_i' \otimes \hat{\rho}_{-i}) 
    = \rho_i' \otimes \Tr_{-j} (\hat{\rho}_{-i})
    =  \rho'_i \otimes \hat{\rho}_j,
    $ and thus on the left-hand side of the equation we have that player $i$'s expected utility given this joint state is $u_i(\rho_i' \otimes \hat{\rho}_{-i}) = \sum_{j : (i,j) \in E} u_{ij}(\rho_i' \otimes \hat{\rho}_j)$. 
    On the other hand, if the joint state on all registers is $\rho_i' \otimes \Tr_i \rho$, then for each $j : (i,j) \in E$ we have that the joint state on player $i$ and $j$'s register is also $
    \Tr_{-ij}
    (\rho_i' \otimes \Tr_i \rho) 
    = \rho_i' \otimes \Tr_{-j} (\Tr_i \rho)
    = \rho_i' \otimes \Tr_{-j} \rho
    = \rho_i' \otimes \hat{\rho}_j,
    $ so on the right hand side of the equation we have that player $i$'s expected utility given this joint state is also $u_i(\rho_i' \otimes \Tr_i \rho) = \sum_{j : (i,j) \in E} u_{ij}(\rho_i' \otimes \hat{\rho}_j)$.

    Now fix an $i \in [k]$ and a $\rho_i' \in \mathrm{D}(\H_i)$. By \eqref{polymatrix_key} and the fact that $\rho$ is a \ref{QCCE}, we have that $u_i(\rho) \geq u_i(\rho_i' \otimes \Tr_i \rho) = u_i(\rho_i' \otimes \hat{\rho}_{-i})$, and also that $u_j(\rho) \geq u_j( \hat{\rho}_j \otimes \Tr_j \rho) = u_j(\hat{\rho}) \; \forall j$. Then, summing up the utilities attained by each player on the joint state $\rho$ and using the fact the $\mathcal{G}$ is zero-sum, we have that
    \begin{align*}
        0 
        =
        \sum_{j \in [n]} u_j(\rho)
        &= \sum_{j \neq i} u_j(\rho) + u_i(\rho)  \geq \sum_{j \neq i} u_j(\hat{\rho}) + u_i (\rho_i' \otimes \hat{\rho}_{-i}) = -u_i (\hat{\rho}) + u_i(\rho_i' \otimes \hat{\rho}_{-i}),
    \end{align*}
    i.e., that $u_i(\hat{\rho}) \geq u_i (\rho_i' \otimes \hat{\rho}_{-i}).$ Since this holds for any given $i \in [n]$ and $\rho_i' \in \mathrm{D}(\H_i)$, $\hat{\rho}$ is a \ref{QNE}.
\end{proof}

Finally, we use the previously proved result about convergence to \ref{QCCE}s in general quantum games, in conjunction with the lemmas presented above, to prove convergence of no-external-regret algorithms to \ref{QNE}s in polymatrix quantum zero-sum games. This generalizes the analogous result of \cite{cai2011minmax} from classical polymatrix zero-sum games to the quantum setting.

\begin{theorem}
   If all players in a polymatrix quantum zero-sum game (Definition \ref{def:polymatrix}) use no-external-regret algorithms, then the product of their time-averaged individual histories of play converges to the set of \ref{QNE}. In particular, if all players update their strategies with a no-external-regret algorithm that guarantees time-averaged regret of $\leq \epsilon$ after $T$ timesteps, then the time-averaged joint history of play after $T$ timesteps is a separable $k\epsilon$-\ref{QNE}.
\end{theorem}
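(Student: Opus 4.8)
The plan is to chain together the three results already established in this paper: Theorem~\ref{thm:_sep-QCCE=timeave}(a) (no-$\mathbf{\Phi}$-regret learning converges to separable quantum $\mathbf{\Phi}$-equilibria, specializing $\mathbf{\Phi}$ to the replacement channels so that the limit is a separable \ref{QCCE}), Lemma~\ref{lem:polymatrixQG} (a polymatrix quantum zero-sum game is an instance of \eqref{QG:_multilinear}, so the convergence theorem applies), and Lemma~\ref{lem:polymatrixmarginal} (the marginalized state $\hat\rho = \bigotimes_i \Tr_{-i}\rho$ of any \ref{QCCE} $\rho$ is a \ref{QNE}). First I would invoke Lemma~\ref{lem:polymatrixQG} to note that the polymatrix game falls under the scope of Theorem~\ref{thm:_sep-QCCE=timeave}, so that if each player runs a no-external-regret algorithm, the time-averaged joint history of play $\overline\rho(T) = \frac1T\sum_{t=1}^T \bigotimes_i \rho_i^t$ has all its limit points in the set of separable \ref{QCCE}s. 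Then I would apply Lemma~\ref{lem:polymatrixmarginal} to each such limit point $\rho^*$: its marginalization $\hat{\rho^*}$ is a \ref{QNE}. Finally, since marginalization is a continuous map (each partial trace $\Tr_{-i}$ is linear, hence continuous, and tensoring is continuous), and since the marginalization of $\overline\rho(T)$ is exactly the product of the individual time-averaged histories $\bigotimes_i \big(\frac1T\sum_t \rho_i^t\big)$, the product of individual time-averaged histories converges to the set of \ref{QNE}s along any convergent subsequence of $\overline\rho(T)$.

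For the quantitative (finite-$T$) statement, I would track the error through Lemma~\ref{lem:polymatrixmarginal}'s argument rather than just taking limits. Suppose after $T$ steps each player has time-averaged regret $\le \epsilon$; then by the argument in the proof of Theorem~\ref{thm:_sep-QCCE=timeave}(a), $\overline\rho(T)$ is a separable $\epsilon$-\ref{QCCE}, i.e.\ $u_i(\overline\rho(T)) \ge u_i(\rho_i' \otimes \Tr_i \overline\rho(T)) - \epsilon$ for every $i$ and $\rho_i'$. Re-running the proof of Lemma~\ref{lem:polymatrixmarginal} with this slack: equation~\eqref{polymatrix_key} still holds exactly (it is purely about partial traces and does not involve the equilibrium inequality), so $u_i(\rho_i'\otimes \hat\rho_{-i}) = u_i(\rho_i'\otimes\Tr_i\rho)$, and for $\rho = \overline\rho(T)$ we get $u_i(\overline\rho(T)) \ge u_i(\rho_i'\otimes\hat\rho_{-i}) - \epsilon$ and $u_j(\overline\rho(T)) \ge u_j(\hat\rho) - \epsilon$ for all $j \ne i$. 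Summing over all $k$ players and using $\sum_j u_j(\overline\rho(T)) = 0$ (global zero-sum) gives
\[
0 = \sum_{j} u_j(\overline\rho(T)) \ge \sum_{j\ne i}\big(u_j(\hat\rho) - \epsilon\big) + \big(u_i(\rho_i'\otimes\hat\rho_{-i}) - \epsilon\big) = -u_i(\hat\rho) + u_i(\rho_i'\otimes\hat\rho_{-i}) - k\epsilon,
\]
which rearranges to $u_i(\hat\rho) \ge u_i(\rho_i'\otimes\hat\rho_{-i}) - k\epsilon$; since this holds for all $i$ and all $\rho_i'$, the marginalized state $\hat\rho$ is a $k\epsilon$-\ref{QNE}, and it is separable (indeed a product state). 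This matches the claimed $k\epsilon$ bound. For the ``converges to the set of \ref{QNE}s'' claim one additionally notes compactness of the set of product states together with continuity of marginalization to extract convergent subsequences whose limits are genuine \ref{QNE}s.

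The main obstacle — really the only non-bookkeeping point — is making precise the sense in which ``the product of time-averaged individual histories converges to the set of \ref{QNE}s'', since this set need not be a singleton (unlike the value of a two-player zero-sum game; in the polymatrix case different \ref{QNE}s can have different per-player utilities). The right formulation is the standard one: $\mathrm{dist}\big(\bigotimes_i \bar\rho_i(T),\ \mathrm{QNE}\big) \to 0$ as $T\to\infty$. To prove this I would argue by contradiction along a subsequence: if the distance did not go to zero, compactness of $\mathrm{D}(\bigotimes_i\H_i)$ gives a subsequence $T_m$ along which $\overline\rho(T_m)$ converges to some $\rho^*$, which by Theorem~\ref{thm:_sep-QCCE=timeave}(a) is a separable \ref{QCCE}; by continuity of marginalization $\bigotimes_i\bar\rho_i(T_m) \to \hat{\rho^*}$, which by Lemma~\ref{lem:polymatrixmarginal} is a \ref{QNE} — contradicting the bounded-away-from-QNE assumption. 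One should double-check one routine compatibility: the marginalization of $\overline\rho(T) = \frac1T\sum_t\bigotimes_i\rho_i^t$ under $\Tr_{-i}$ equals $\frac1T\sum_t\rho_i^t$ (since $\Tr_{-i}\big(\bigotimes_{i'}\rho_{i'}^t\big) = \rho_i^t$ by the defining property of the partial trace and linearity), so that $\hat{\overline\rho(T)} = \bigotimes_i \bar\rho_i(T)$ is literally the object in the theorem statement — this is the only place the precise form of the time-averaged play matters, and it is immediate.
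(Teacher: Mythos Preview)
Your proof is correct, and the quantitative argument yielding the $k\epsilon$ bound is clean. The route you take, however, is genuinely different in organization from the paper's. You proceed modularly: invoke Lemma~\ref{lem:polymatrixQG} so that Theorem~\ref{thm:_sep-QCCE=timeave}(a) applies, obtain that $\overline{\rho}(T)$ is a separable $\epsilon$-\ref{QCCE}, and then re-run the proof of Lemma~\ref{lem:polymatrixmarginal} with the $\epsilon$ slack to conclude that the marginalized product state $\hat\rho=\bigotimes_i\overline{\rho}_i$ is a $k\epsilon$-\ref{QNE}. The paper instead gives a direct, self-contained argument at the level of the edge utilities $u_{ij}$, introducing best responses $z_i$ against $\overline{\rho}_{-i}$, applying the no-regret inequality for each player, summing over $i\in V$, and twice using the global zero-sum identity (once for $\bigotimes_i\rho_i^t$ and once for $\bigotimes_i\overline{\rho}_i$) to squeeze out the $k\epsilon$ bound on each player's incentive to deviate. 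Structurally this mirrors the classical Cai--Daskalakis proof rather than factoring through the \ref{QCCE} notion. Your version buys reuse of already-proven results and makes the role of Lemma~\ref{lem:polymatrixmarginal} explicit; the paper's version is more hands-on with the polymatrix structure and avoids the intermediate \ref{QCCE} step entirely. Both arrive at the same inequality $u_i(\hat\rho)\ge u_i(\rho_i'\otimes\hat\rho_{-i})-k\epsilon$ via the same two uses of global zero-sum; the difference is packaging, not substance.
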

\begin{proof}
    Suppose that after $T$ iterations of running no-external-regret algorithms, every player has time-average regret $\leq \epsilon = \epsilon(T)$. Let $\rho^t = \bigotimes_{i=1}^k \rho_i^t$ denote the strategy profile (product distribution) at time $t$, and let $\overline{\rho} = \overline{\rho}(T) := \frac{1}{T} \sum_{t=1}^T \rho^t$ be the time-averaged history of these strategy profiles. Moreover, note that if players $i$ and $j$ play with strategies $\rho_i$ and $\rho_j$ respectively, we can write the utility for player $i$ in the form $u_{ij}(\rho_i \otimes \rho_j)$. For any $\rho_i'\in \mathrm{D}(\H_i)$ we can write:

    \begin{align*}
        \frac{1}{T} \sum_{t=1}^T \sum_{(i,j)\in E} u_{ij}(\rho_i' \otimes \rho_j^t) = \sum_{(i,j)\in E} u_{ij}(\rho_i' \otimes \overline{\rho}_j)
    \end{align*}
    Let $z_i$ be the best response of $i$ if all other players use $\overline{\rho}_j$. Then for all $i$ and any $\rho_i'\in \mathrm{D}(\H_i)$,

    \begin{align*}
        \sum_{(i,j)\in E} u_{ij}(z_i \otimes \overline{\rho}_j) \geq \sum_{(i,j)\in E} u_{ij}(\rho_i' \otimes \overline{\rho}_j).
    \end{align*}
Next, by the no-external-regret property, we have that
\begin{align*}
     \frac{1}{T} \sum_{t=1}^T \sum_{(i,j)\in E} u_{ij}(\rho_i^t \otimes \rho_j^t) \geq \frac{1}{T} \sum_{t=1}^T \left(\sum_{(i,j)\in E} u_{ij}(z_i \otimes \rho_j^t) \right) - \epsilon  =  \sum_{(i,j)\in E} u_{ij}(z_i \otimes \overline{\rho}_j) - \epsilon
\end{align*}
Summing both sides of the above over all $i\in V$, we have from the LHS that
\begin{align*}
     \sum_{i\in V}\left(\frac{1}{T} \sum_{t=1}^T \sum_{(i,j)\in E} u_{ij}(\rho_i^t \otimes \rho_j^t) \right) = \frac{1}{T} \sum_{t=1}^T  \left(\sum_{i\in V} \sum_{(i,j)\in E} u_{ij}(\rho_i^t \otimes \rho_j^t) \right) = 0,
\end{align*}
which is due to the global zero-sum property of the quantum polymatrix game. Moreover, the sum on the RHS is given by
\begin{align*}
   \sum_{i\in V} \sum_{(i,j)\in E} u_{ij}(z_i \otimes \overline{\rho}_j) - k\epsilon
\end{align*}
since there are $k$ players. Combining the two, and using the fact that the LHS is at least as large as the RHS,

\begin{align*}
    0 \geq \sum_{i\in V} \sum_{(i,j)\in E} u_{ij}(z_i \otimes \overline{\rho}_j) - k\epsilon \implies k\epsilon \geq \sum_{i\in V} \sum_{(i,j)\in E} u_{ij}(z_i \otimes \overline{\rho}_j).
\end{align*}

We now show that each player $i$ playing $\overline{\rho}_i$ is a $k\epsilon$- approximate \ref{QNE}. Note that if each player $i$ plays $\overline{\rho}_i$, the sum of all players' payoffs is $0$, i.e. 
\[
\sum_{i\in V} \sum_{(i,j)\in E} u_{ij}(\overline{\rho}_i \otimes \overline{\rho}_j) = 0.
\] 
Hence we have that 
\[
 k\epsilon \geq \sum_{i\in V} \left( \sum_{(i,j)\in E} u_{ij}(z_i \otimes \overline{\rho}_j) - \sum_{(i,j)\in E} u_{ij}(\overline{\rho}_i \otimes \overline{\rho}_j)\right)
\] 
However, the sum is over non-negative numbers since the $z_i$s are best responses. We have a sum of non-negative numbers bounded by $k\epsilon$, so for any $i\in V$,
\[
k\epsilon \geq  \sum_{(i,j)\in E} u_{ij}(z_i \otimes \overline{\rho}_j) - \sum_{(i,j)\in E} u_{ij}(\overline{\rho}_i \otimes \overline{\rho}_j) \geq 0.
\]
Thus, for all $i$, if all other players $j$ play $\overline{\rho}_j$, the payoff given by playing the best response is at most $k\epsilon$ better than the payoff obtained by playing $\overline{\rho}_i$. Hence it is a $k\epsilon$-\ref{QNE} for each player $j$ to play $\overline{\rho}_j$.
\end{proof}
This result gives us a decentralized way of arriving at quantum Nash equilibria in a broader class of multi-player games, i.e., that of polymatrix quantum zero-sum games. Exploring if there are other classes of multi-player games for which \ref{QNE} are tractable is left to future work.

\begin{remark}
    Consider a $k$-player polymatrix quantum zero-sum game with utilities in $[-1,1]$ and let $n$ be the largest dimension of the players' registers.  For any $\epsilon \leq 2k$, if each player uses MMWU with fixed stepsize $\eta = \frac{\epsilon}{2k}$, the product of their time-averaged individual sequences of play $\left(\frac{1}{T}\sum_{t=1}^T \rho_1^t,\dots, \frac{1}{T}\sum_{t=1}^T \rho_k^t \right)$ after $T = \frac{4k^2 \ln n}{\epsilon^2}$ steps is an $\epsilon$-\ref{QNE}. 
\end{remark}
The reasoning for the above convergence rate is similar to Remark \ref{remark:generalQG}. However, since an algorithm that achieves $\epsilon$-regret gives a $k\epsilon$-\ref{QNE}, we require running the algorithm until $\frac{\epsilon}{k}$ regret is achieved instead.
\section{MMWU Experiments}
\label{sec:experiments}

In this section, we consider learning using the specific no-external-regret algorithm, MMWU (Algorithm \ref{alg:mmwu}), and present several experiments that corroborate our theoretical results about time-averaged convergence to equilibria. For two-player zero-sum quantum games, we also present some plots showcasing the day-to-day behavior of the iterates.



First, in Figure \ref{fig:exploit} we show the exploitability (as defined in Section \ref{subsec:qeq}) of MMWU in both general and zero-sum quantum games.  
For the case of general games, we consider the maximum individual exploitability of the time-averaged joint strategy for both players, which we term the ``\ref{QCCE}-exploitability'' of the players' strategies, while in the case of zero-sum games we consider the maximum individual exploitability of the product of the time-averaged individual sequences of play, which we term the ``\ref{QNE}-exploitability''. We are concerned with the maximum over the individual exploitabilities of each of the players since if each player attains $\epsilon$-exploitability, then all players are at an $\epsilon$-\ref{QCCE}/\ref{QNE}. In both plots, we use the doubling trick to run MMWU. The maximum individual exploitabilities go to zero or remain close to zero, implying time-averaged convergence to \ref{QCCE} and \ref{QNE} respectively.

\begin{figure}[!htb]
    \centering
    \begin{minipage}{.49\linewidth}
      \centering
      \includegraphics[width=.95\linewidth]{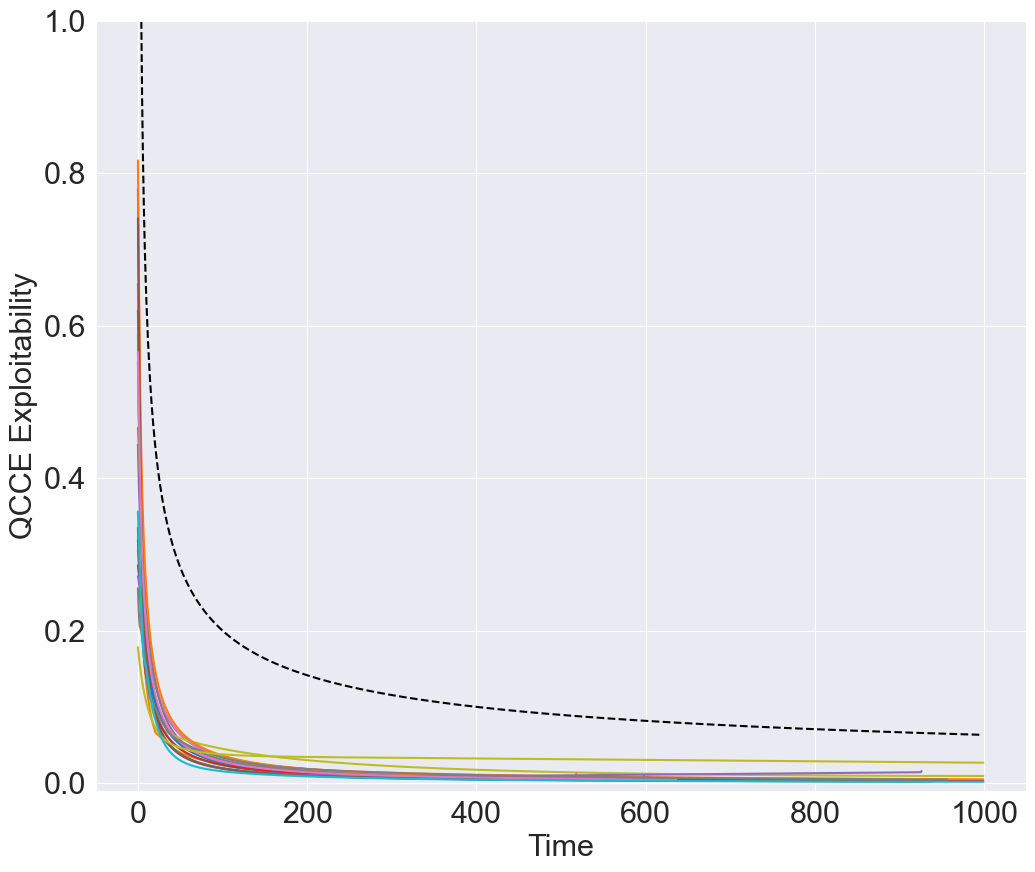}
     General quantum games
    \end{minipage}
    \begin{minipage}{.49\linewidth}
      \centering
      \includegraphics[width=.95\linewidth]{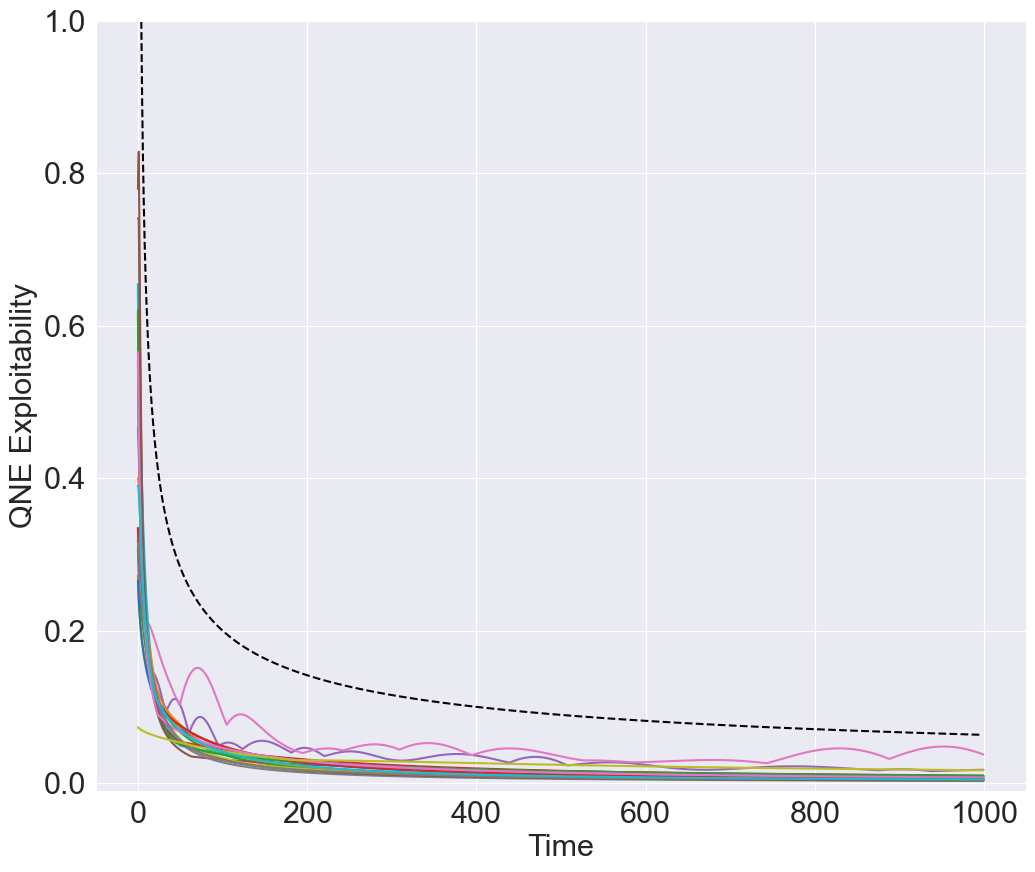}
      Zero-sum quantum games
    \end{minipage}
    \caption{Maximum individual exploitability of time-averaged strategies of players using MMWU in 20 randomly generated $\mathbb{C}^2\otimes\mathbb{C}^2$ quantum games. The black dotted line denotes the theoretical upper-bound on the exploitability.}
    \label{fig:exploit}
\end{figure}

Next, we present some indicative examples that elucidate the behavior of MMWU in two-player quantum zero-sum games.
We see that in general, the trajectories of the joint state of the players either oscillate or go to a point on the boundary, and showcase this behavior alongside the time-averaged values of the trajectories in Figures \ref{fig:qzsg_osc} and \ref{fig:qzsg_conv}. In order to represent time on the Bloch sphere, we use a color gradient from green to blue (light green denotes time $t=0$, dark blue denotes time $t=4000$). From the examples, even in the relatively well-studied case of MMWU, it is clear that some interesting types of behavior can be observed in quantum zero-sum games and beyond. \edits{The code used to generate our MMWU experiments can be found in the following Github repository: \href{https://github.com/ryanndelion/No-Regret-Learning-Quantum-Games}{https://github.com/ryanndelion/No-Regret-Learning-Quantum-Games}.}

\begin{figure}[!htb]
    \centering
    \begin{minipage}{.32\linewidth}
      \centering
      \includegraphics[width=.73\linewidth]{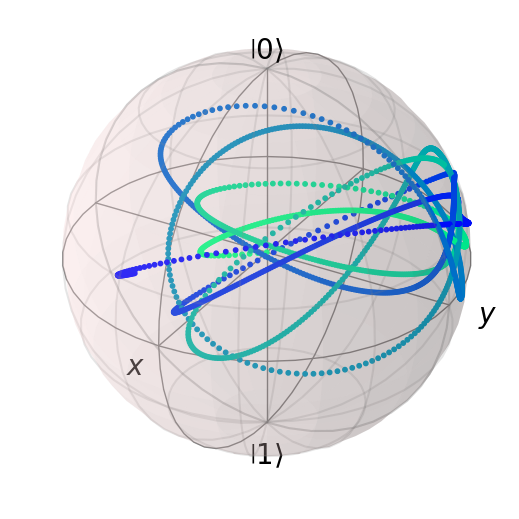}\\
      Trajectory of one player's strategy plotted on the Bloch sphere
    \end{minipage}
    \begin{minipage}{.33\linewidth}
      \centering
      \includegraphics[width=.95\linewidth]{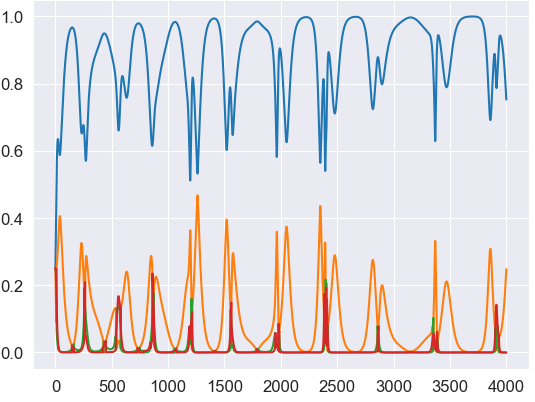}
      Eigenvalues of the players' joint state over time
    \end{minipage}
    \begin{minipage}{.33\linewidth}
      \centering
      \includegraphics[width=.95\linewidth]{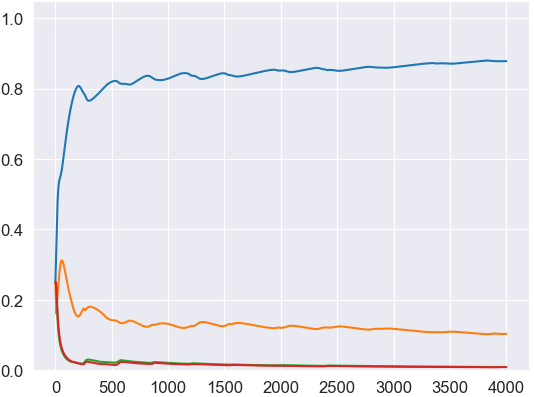}
      Eigenvalues of the time-averaged joint state over time
    \end{minipage}
    \caption{Example of oscillatory behaviour of MMWU in two-player quantum zero-sum games. Time is represented using a gradient from green to blue on the Bloch sphere.}
    \label{fig:qzsg_osc}
\end{figure}

\begin{figure}[!htb]
    \centering
    \begin{minipage}{.32\linewidth}
      \centering
      \includegraphics[width=.73\linewidth]{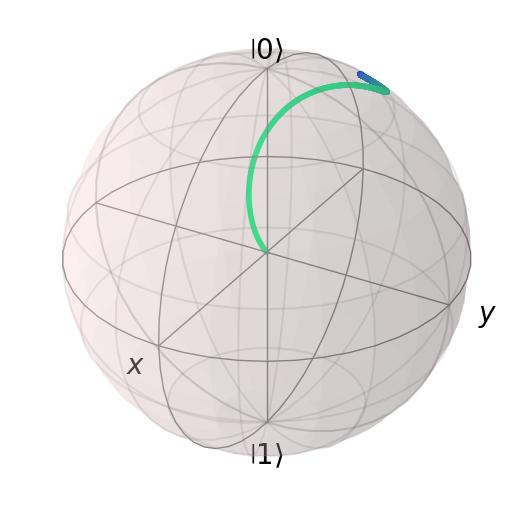}\\
      Trajectory of one player's strategy plotted on the Bloch sphere
    \end{minipage}
    \begin{minipage}{.33\linewidth}
      \centering
      \includegraphics[width=.95\linewidth]{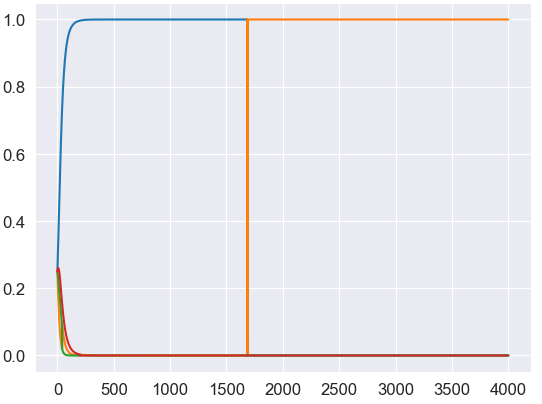}
      Eigenvalues of the players' joint state over time
    \end{minipage}
    \begin{minipage}{.33\linewidth}
      \centering
      \includegraphics[width=.95\linewidth]{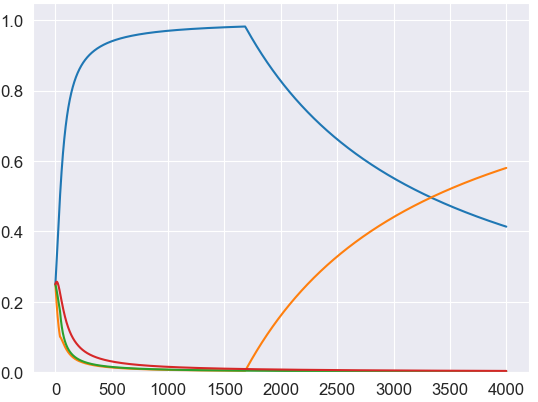}
      Eigenvalues of the time-averaged joint state over time
    \end{minipage}
    \caption{Example of MMWU converging to the boundary (i.e., pure states) in two-player quantum zero-sum games. Time is represented using a gradient from green to blue on the Bloch sphere.}
    \label{fig:qzsg_conv}
\end{figure}


\section{Discussion and Future Work}
In this work we provide a general class of quantum games that fits with and subsumes prior formulations. We explore equilibrium notions in this class of games, inspired by classical solution concepts and $\mathbf{\Phi}$-regret and show an interesting analogy between deviation maps in the classical and quantum settings. We introduce quantum coarse correlated equilibria and show that for general quantum games, the set of separable \ref{QPhiE}s is actually the set of limits points of the time-averaged distribution of joint play when players use no-$\mathbf{\Phi}$-regret algorithms. Moreover, in the two-player and polymatrix zero-sum cases, no-regret algorithms result in convergence to quantum Nash equilibria. Overall, this indicates a rich connection between the worlds of online optimization, classical learning in games, and quantum information theory.

An interesting future direction of our work is to study general $\mathbf{\Phi}$-equilibria in other classes of quantum games, and the capability of modifications to the standard no-regret learning paradigm that can arrive at these equilibria. Specifically, designing implementable algorithms that converge to quantum correlated equilibria remains an important task, given that similar approaches have been successful in the classical regime.
Additionally, the quantum game formulation allows for entangled equilibria not reachable via the standard paradigm of learning in games\edits{, examples of which were constructed in Appendix \ref{sec:maxent}.}
Investigating these entangled equilibria and how they can be computed or learnt by distributed agents is a tantalizing direction for future work. \edits{In the setting of classical games, several approaches have utilized coupled or correlated mechanisms to converge to different (and often better) equilibria than their uncoupled counterparts \cite{balcan2013circumventing,zhang2023steering}. Thus, studying a variant of no-regret learning which utilizes a mediator or correlating mechanism seems to be a reasonable initial approach to learning entangled equilibria in our formulation of quantum games.
}

\section*{Acknowledgments}
 
This research is supported in part by the National Research Foundation, Singapore and the Agency for Science, Technology and Research (A*STAR) under its Quantum Engineering Programme NRF2021-QEP2-02-P05, and by the National Research Foundation, Singapore and DSO National Laboratories under its AI Singapore Program (AISG Award No: AISG2-RP-2020-016), 
grant PIESGP-AI-2020-01, AME Programmatic Fund (Grant No.A20H6b0151) from A*STAR and Provost’s Chair Professorship grant RGEPPV2101. Wayne Lin and Ryann Sim gratefully acknowledge support from the SUTD President's Graduate Fellowship (SUTD-PGF).
\bibliography{doi_quantum_refs}

\begin{thebibliography}{10}

\bibitem{gutoski2007toward}
Gus Gutoski and John Watrous.
\newblock ``Toward a general theory of quantum games''.
\newblock In Proceedings of the thirty-ninth annual ACM symposium on Theory of
  computing.
\newblock \href{https://dx.doi.org/10.1145/1250790.1250873}{Pages 565--574}.
\newblock ~(2007).

\bibitem{bostanci2022quantum}
John Bostanci and John Watrous.
\newblock ``Quantum game theory and the complexity of approximating quantum
  {N}ash equilibria''.
\newblock \href{https://dx.doi.org/10.22331/q-2022-12-22-882}{Quantum {\bf 6},
  882}~(2022).

\bibitem{2012QSGT}
Shengyu Zhang.
\newblock ``Quantum strategic game theory''.
\newblock In Proceedings of the 3rd Innovations in Theoretical Computer Science
  Conference.
\newblock \href{https://dx.doi.org/10.1145/2090236.2090241}{Pages 39--59}.
\newblock ~(2012).

\bibitem{wu2004new}
Jinshan Wu.
\newblock ``A new mathematical representation of game theory, {I}''~(2004).
\newblock
  \href{http://arxiv.org/abs/quant-ph/0404159}{arXiv:quant-ph/0404159}.

\bibitem{khan2018quantum}
Faisal~Shah Khan, Neal Solmeyer, Radhakrishnan Balu, and Travis~S Humble.
\newblock ``Quantum games: a review of the history, current state, and
  interpretation''.
\newblock \href{https://dx.doi.org/10.1007/s11128-018-2082-8}{Quantum
  Information Processing {\bf 17}, 1--42}~(2018).

\bibitem{eisert2000quantum}
Jens Eisert and Martin Wilkens.
\newblock ``Quantum games''.
\newblock \href{https://dx.doi.org/10.1080/09500340008232180}{Journal of Modern
  Optics {\bf 47}, 2543--2556}~(2000).

\bibitem{marinatto2000quantum}
Luca Marinatto and Tullio Weber.
\newblock ``A quantum approach to static games of complete information''.
\newblock \href{https://dx.doi.org/10.1016/S0375-9601(00)00441-2}{Physics
  Letters A {\bf 272}, 291--303}~(2000).

\bibitem{jain2009parallel}
Rahul Jain and John Watrous.
\newblock ``Parallel approximation of non-interactive zero-sum quantum games''.
\newblock In 2009 24th Annual IEEE Conference on Computational Complexity.
\newblock \href{https://dx.doi.org/10.1109/CCC.2009.26}{Pages 243--253}.
\newblock IEEE~(2009).

\bibitem{2013QCE}
Zhaohui Wei and Shengyu Zhang.
\newblock ``Full characterization of quantum correlated equilibria''.
\newblock \href{https://dx.doi.org/10.26421/QIC13.9-10-7}{Quantum Inf. Comput.
  {\bf 13}, 846--860}~(2013).

\bibitem{cesa2006prediction}
Nicolo Cesa-Bianchi and G{\'a}bor Lugosi.
\newblock ``Prediction, learning, and games''.
\newblock \href{https://dx.doi.org/10.1017/CBO9780511546921}{Cambridge
  university press}. ~(2006).

\bibitem{nisan2007algorithmic}
Noam Nisan, Tim Roughgarden, Eva Tardos, and Vijay~V Vazirani.
\newblock ``Algorithmic game theory''.
\newblock \href{https://dx.doi.org/10.1017/CBO9780511800481}{Cambridge
  university press}. ~(2007).

\bibitem{lotidis2023learning}
Kyriakos Lotidis, Panayotis Mertikopoulos, and Nicholas Bambos.
\newblock ``Learning in quantum games''~(2023).
\newblock  \href{http://arxiv.org/abs/2302.02333}{arXiv:2302.02333}.

\bibitem{jain2022matrix}
Rahul Jain, Georgios Piliouras, and Ryann Sim.
\newblock ``Matrix multiplicative weights updates in quantum zero-sum games:
  Conservation laws \& recurrence''~(2022).
\newblock  \href{http://arxiv.org/abs/2211.01681}{arXiv:2211.01681}.

\bibitem{lin2023quantum}
Wayne Lin, Georgios Piliouras, Ryann Sim, and Antonios Varvitsiotis.
\newblock ``Quantum potential games, replicator dynamics, and the separability
  problem''~(2023).
\newblock  \href{http://arxiv.org/abs/2302.04789}{arXiv:2302.04789}.

\bibitem{freund1999adaptive}
Yoav Freund and Robert~E Schapire.
\newblock ``Adaptive game playing using multiplicative weights''.
\newblock \href{https://dx.doi.org/10.1006/game.1999.0738}{Games and Economic
  Behavior {\bf 29}, 79--103}~(1999).

\bibitem{cai2011minmax}
Yang Cai and Constantinos Daskalakis.
\newblock ``On minmax theorems for multiplayer games''.
\newblock In Proceedings of the twenty-second annual ACM-SIAM symposium on
  Discrete algorithms.
\newblock \href{https://dx.doi.org/10.1137/1.9781611973082.20}{Pages 217--234}.
\newblock SIAM~(2011).

\bibitem{daskalakis2009network}
Constantinos Daskalakis and Christos~H Papadimitriou.
\newblock ``On a network generalization of the minmax theorem''.
\newblock In International Colloquium on Automata, Languages, and Programming.
\newblock \href{https://dx.doi.org/10.1007/978-3-642-02930-1_35}{Pages
  423--434}.
\newblock Springer~(2009).

\bibitem{greenwald2003general}
Amy Greenwald and Amir Jafari.
\newblock ``A general class of no-regret learning algorithms and game-theoretic
  equilibria''.
\newblock In COLT.
\newblock \href{https://dx.doi.org/10.1007/978-3-540-45167-9_2}{Volume~3, pages
  2--12}.
\newblock ~(2003).

\bibitem{arora2012multiplicative}
Sanjeev Arora, Elad Hazan, and Satyen Kale.
\newblock ``The multiplicative weights update method: a meta-algorithm and
  applications''.
\newblock \href{https://dx.doi.org/10.4086/toc.2012.v008a006}{Theory of
  computing {\bf 8}, 121--164}~(2012).

\bibitem{piliouras2014optimization}
Georgios Piliouras and Jeff~S Shamma.
\newblock ``Optimization despite chaos: Convex relaxations to complex limit
  sets via {P}oincar{\'e} recurrence''.
\newblock In Proceedings of the twenty-fifth annual ACM-SIAM symposium on
  Discrete algorithms.
\newblock \href{https://dx.doi.org/10.1137/1.9781611973402.64}{Pages 861--873}.
\newblock SIAM~(2014).

\bibitem{boone2019darwin}
Victor Boone and Georgios Piliouras.
\newblock ``From {D}arwin to {P}oincar{\'e} and von {N}eumann: Recurrence and
  cycles in evolutionary and algorithmic game theory''.
\newblock In Web and Internet Economics: 15th International Conference, WINE
  2019, New York, NY, USA, December 10--12, 2019, Proceedings 15.
\newblock \href{https://dx.doi.org/10.1007/978-3-030-35389-6_7}{Pages 85--99}.
\newblock Springer~(2019).

\bibitem{mertikopoulos2018cycles}
Panayotis Mertikopoulos, Christos Papadimitriou, and Georgios Piliouras.
\newblock ``Cycles in adversarial regularized learning''.
\newblock In Proceedings of the Twenty-Ninth Annual ACM-SIAM Symposium on
  Discrete Algorithms.
\newblock \href{https://dx.doi.org/10.1137/1.9781611975031.172}{Pages
  2703--2717}.
\newblock SIAM~(2018).

\bibitem{gordon2008no}
Geoffrey~J Gordon, Amy Greenwald, and Casey Marks.
\newblock ``No-regret learning in convex games''.
\newblock In Proceedings of the 25th international conference on Machine
  learning.
\newblock \href{https://dx.doi.org/10.1145/1390156.1390202}{Pages 360--367}.
\newblock ~(2008).

\bibitem{stoltz2007learning}
Gilles Stoltz and G{\'a}bor Lugosi.
\newblock ``Learning correlated equilibria in games with compact sets of
  strategies''.
\newblock \href{https://dx.doi.org/10.1016/j.geb.2006.04.007}{Games and
  Economic Behavior {\bf 59}, 187--208}~(2007).

\bibitem{ickstadt2022semidefinite}
Constantin Ickstadt, Thorsten Theobald, and Elias Tsigaridas.
\newblock ``Semidefinite games''~(2022).
\newblock  \href{http://arxiv.org/abs/2202.12035}{arXiv:2202.12035}.

\bibitem{johanson2011accelerating}
Michael Johanson, Kevin Waugh, Michael Bowling, and Martin Zinkevich.
\newblock ``Accelerating best response calculation in large extensive games''.
\newblock In Proceedings of the Twenty-Second International Joint Conference on
  Artificial Intelligence - Volume One.
\newblock \href{https://dx.doi.org/10.5591/978-1-57735-516-8/IJCAI11-054}{Page
  258–265}.
\newblock IJCAI'11. AAAI Press~(2011).

\bibitem{roughgarden2016twenty}
Tim Roughgarden.
\newblock ``Twenty lectures on algorithmic game theory''.
\newblock \href{https://dx.doi.org/10.1017/CBO9781316779309}{Cambridge
  University Press}. ~(2016).

\bibitem{nash1951games}
John Nash.
\newblock ``Non-cooperative games''.
\newblock \href{https://dx.doi.org/10.4324/9781003547983-16}{Annals of
  Mathematics {\bf 54}, 286--295}~(1951).

\bibitem{aumann1974subjectivity}
Robert~J Aumann.
\newblock ``Subjectivity and correlation in randomized strategies''.
\newblock \href{https://dx.doi.org/10.1142/9789811227332_0007}{Journal of
  mathematical Economics {\bf 1}, 67--96}~(1974).

\bibitem{moulin1978strategically}
Herv{\'e} Moulin and J~P Vial.
\newblock ``Strategically zero-sum games: the class of games whose completely
  mixed equilibria cannot be improved upon''.
\newblock \href{https://dx.doi.org/10.1007/BF01769190}{International Journal of
  Game Theory {\bf 7}, 201--221}~(1978).

\bibitem{hazan2016introduction}
Elad Hazan et~al.
\newblock ``Introduction to online convex optimization''.
\newblock \href{https://dx.doi.org/10.1561/2400000013}{Foundations and
  Trends{\textregistered} in Optimization {\bf 2}, 157--325}~(2016).

\bibitem{kale2007efficient}
Satyen Kale.
\newblock ``Efficient algorithms using the multiplicative weights update
  method''.
\newblock PhD thesis.
\newblock Princeton University.
\newblock ~(2007).
\newblock
  url:~\href{https://www.proquest.com/dissertations-theses/efficient-algorithms-using-multiplicative-weights/docview/304824121/se-2}{www.proquest.com/dissertations-theses/efficient-algorithms-using-multiplicative-weights/docview/304824121/se-2}.

\bibitem{tsuda2005matrix}
Koji Tsuda, Gunnar R{\"a}tsch, and Manfred~K Warmuth.
\newblock ``Matrix exponentiated gradient updates for on-line learning and
  {B}regman projection''.
\newblock Journal of Machine Learning Research {\bf 6}, 995--1018~(2005).
\newblock
  url:~\href{http://jmlr.org/papers/v6/tsuda05a.html}{http://jmlr.org/papers/v6/tsuda05a.html}.

\bibitem{arora2007combinatorial}
Sanjeev Arora and Satyen Kale.
\newblock ``A combinatorial, primal-dual approach to semidefinite programs''.
\newblock In Proceedings of the thirty-ninth annual ACM symposium on Theory of
  computing.
\newblock \href{https://dx.doi.org/10.1145/1250790.1250823}{Pages 227--236}.
\newblock ~(2007).

\bibitem{jain2011qip}
Rahul Jain, Zhengfeng Ji, Sarvagya Upadhyay, and John Watrous.
\newblock ``{QIP=PSPACE}''.
\newblock \href{https://dx.doi.org/10.1145/1806689.1806768}{Journal of the ACM
  (JACM) {\bf 58}, 1--27}~(2011).

\bibitem{allen2015spectral}
Zeyuan Allen-Zhu, Zhenyu Liao, and Lorenzo Orecchia.
\newblock ``Spectral sparsification and regret minimization beyond matrix
  multiplicative updates''.
\newblock In Proceedings of the forty-seventh annual ACM symposium on Theory of
  computing.
\newblock \href{https://dx.doi.org/10.1145/2746539.2746610}{Pages 237--245}.
\newblock ~(2015).

\bibitem{hart2000simple}
Sergiu Hart and Andreu Mas-Colell.
\newblock ``A simple adaptive procedure leading to correlated equilibrium''.
\newblock \href{https://dx.doi.org/10.4159/9780674915343-010}{Econometrica {\bf
  68}, 1127--1150}~(2000).

\bibitem{von1928theorie}
J~v.~Neumann.
\newblock ``Zur theorie der gesellschaftsspiele''.
\newblock \href{https://dx.doi.org/10.1007/BF01448847}{Mathematische annalen
  {\bf 100}, 295--320}~(1928).

\bibitem{accardi2020neumann}
Luigi Accardi and Andreas Boukas.
\newblock ``von {N}eumann's minimax theorem for continuous quantum
  games''~(2020).
\newblock  \href{http://arxiv.org/abs/2006.11502}{arXiv:2006.11502}.

\bibitem{balcan2013circumventing}
Maria-Florina Balcan, Avrim Blum, and Yishay Mansour.
\newblock ``Circumventing the price of anarchy: Leading dynamics to good
  behavior''.
\newblock \href{https://dx.doi.org/10.1137/110821317}{SIAM Journal on Computing
  {\bf 42}, 230--264}~(2013).

\bibitem{zhang2023steering}
Brian~Hu Zhang, Gabriele Farina, Ioannis Anagnostides, Federico Cacciamani,
  Stephen~Marcus McAleer, Andreas~Alexander Haupt, Andrea Celli, Nicola Gatti,
  Vincent Conitzer, and Tuomas Sandholm.
\newblock ``Steering no-regret learners to a desired equilibrium''~(2023).
\newblock  \href{http://arxiv.org/abs/2306.05221}{arXiv:2306.05221}.

\end{thebibliography}
\bibliographystyle{quantum}

\newpage
\appendix
\onecolumn
\section{Examples of entangled equilibria}\label{sec:maxent}

\edits{
In this section we showcase examples of entangled \ref{QCCE}s which are unapproachable by the decentralized no-regret learning paradigm. 
The idea is to use \emph{maximally-entangled games}, in which payoffs are assigned to states in a maximally-entangled basis instead of the standard product-state basis as one might think to do when attempting to embed a classical game in the quantum game formulation that we use (indeed, that is precisely what is done in \cite{2012QSGT}). 
We shall first define maximally-entangled games before characterizing the maximally-entangled \ref{QCCE}s in any maximally-entangled game.


For simplicity we consider two-player games where each player has access to a qubit ($n_1  = n_2 = 2$). The Bell states 
\begin{equation*}
\label{bell}
	\begin{split}
		\ket{e_{1}} &= \ket{\phi^+} = \frac{1}{\sqrt{2}}( \ket{00} + \ket{11} ), \\
		\ket{e_{2}} &= \ket{\phi^-} = \frac{1}{\sqrt{2}}( \ket{00} - \ket{11} ), \\
		\ket{e_{3}} &= \ket{\psi^+} = \frac{1}{\sqrt{2}}( \ket{01} + \ket{10} ), \\
		\ket{e_{4}} &= \ket{\psi^-} = \frac{1}{\sqrt{2}}( \ket{01} - \ket{10} )
	\end{split}
\end{equation*}
form a maximally entangled basis for the joint space $\mathcal{H}_1 \otimes \mathcal{H}_2$. We can define a \emph{maximally-entangled game} as follows:

\begin{definition}
    A maximally-entangled (max-ent) game is a two-player \ref{QG:_multilinear} in which the game operators are supported on the rank-1 projectors of a maximally entangled basis $\{\ket{e_k}\}$, i.e., the game operators are given by
    \begin{equation}
        \label{eq:maxent}
		R_1 = \sum_{k} a_{k} \ket{e_{k}} \bra{e_{k}},
		\qquad
		R_2 = \sum_{k} b_{k} \ket{e_{k}} \bra{e_{k}}.
    \end{equation}
\end{definition}



The following theorem characterizes the \ref{QCCE}s in a max-ent game that are mixtures of states in the maximally-entangled basis. Crucially, coarse unilateral deviations from mixtures of maximally-entangled states set the other party's reduced state to the maximally mixed state (scaled identity), and the fact that the game operators are supported only on the rank-1 projectors of the maximally-entangled basis makes the utilities achieved by these deviations easy to compute. These two facts make characterizing when such a state is a QCCE easy.

\medskip
\begin{theorem}
Fix a max-ent game on a maximally-entangled basis $\{\ket{e_k}\}$ and game operators given by \eqref{eq:maxent}.
A mixture of states in the maximally-entangled basis, $\rho^* = \sum_{k} \lambda_{k} \ket{e_{k}} \bra{e_{k}}$, is a \ref{QCCE} of the max-ent game if and only if
	\begin{equation}
        \label{maxent_QCCE}
		\sum_{k} a_{k} \lambda_{k} \geq \frac{1}{n_1 n_2} \sum_{k} a_{k}
		\qquad
		\text{and}
		\qquad
		\sum_{k} b_{k} \lambda_{k} \geq \frac{1}{n_1 n_2} \sum_{k} b_{k}.
	\end{equation}
\end{theorem}

\begin{proof}
	For Player 1, the utility achieved from sticking to $\rho^*$ is given by
	\[
		\Tr(R_1 \rho^*) = \sum_{k} a_{k} \lambda_{k},
	\]
	while the utility achieved from deviating to $\rho_1'$ is given by
	\begin{equation*}
	\begin{split}
		\Tr(R_1 (\rho_1' \otimes \Tr_\A \rho^*))
		&=
		\frac{1}{n_2} \Tr(R (\rho_1' \otimes I_\mathcal{B})) \\
		&=
		\frac{1}{n_2} \Tr(\Tr_{\mathcal{B}}(R) \rho_1') \\
		&=
		\frac{1}{n_1 n_2} \sum_{k} a_{k} \Tr(\rho_1') \\
		&=
		\frac{1}{n_1 n_2} \sum_{k} a_{k},
	\end{split}
	\end{equation*}
	where the first equality is due to the fact that 
 \[\Tr_\mathcal{A} \rho^* = \sum_k \lambda_k \Tr_\mathcal{A}(\ketbra{e_k}) = \sum_k \lambda_k ( \frac{1}{n_2}I_\mathcal{B}) = \frac{1}{n_2}I_\mathcal{B}\] and the third equality is due to the fact that \[\Tr_\mathcal{B}(R) = \sum_{k} a_{k} \Tr_\mathcal{B}(\ket{e_{k}} \bra{e_{k}}) = \frac{1}{n_1} \sum_{k} a_{k} I_\A.\]
	Thus Player 1 has no incentive to do a coarse deviation if and only if
	\[
		\sum_{k} a_{k} \lambda_{k} \geq \frac{1}{n_1 n_2} \sum_{k} a_{k}.
	\]
	We can similarly get the analogous statement for Player 2, and thus $\rho^*$ is a \ref{QCCE} if and only if both the conditions in \eqref{maxent_QCCE} hold.
\end{proof}

As a corollary, we are able to construct pure, entangled \ref{QCCE}s in any common-interest max-ent game.

\begin{corollary}
    Fix a maximally-entangled basis $\{\ket{e_k}\}$ for the joint strategy space and suppose that on playing joint strategy $\rho$ both players get common utility $\Tr(R\rho)$ where
    \[
        R = \sum_k a_k \ketbra{e_k}.
    \]
    Define $k^* := \argmax_k a_k$. Then the pure, entangled joint state
    \[
        \rho* = \ketbra{e_{k^*}}
    \]
    is a \ref{QCCE} of this game.
\end{corollary}
}

\section{Additional Quantum Preliminaries}\label{appsec:prelims}
\paragraph{Quantum states.}
\textit{Pure quantum states} correspond to (typically unit-length normalized) vectors in a Hilbert space $\mathcal{H}$.
The simplest case is that of a qubit, which can be represented by a linear superposition of its two orthonormal basis states. 
These vectors are usually denoted as
$| 0 \rangle = \bigl[\begin{smallmatrix}
1\\
0
\end{smallmatrix}\bigr]$
and
$| 1 \rangle = \bigl[\begin{smallmatrix}
0\\
1
\end{smallmatrix}\bigr]$ in the conventional bra–ket notation introduced by Paul Dirac and together span the qubit's two-dimensional Hilbert space. 
A single qubit $\psi$ can be described by a linear combination of $|0 \rangle$ and $|1 \rangle$
: $| \psi \rangle = \alpha |0 \rangle + \beta |1 \rangle$ 
where $\alpha $ and $\beta$ are the probability amplitudes, i.e., complex numbers such that  $| \alpha |^2 + | \beta |^2 = 1$.

\paragraph{Quantum measurements.} We utilize the generalized measurement formalism known as the positive-operator-valued measure (POVM) in the main text, but for completeness we also present several other key formalisms for quantum measurements.

\textit{Idealized von Neumann measurements.} The approach codified by John von Neumann represents a measurement upon a physical system by a self-adjoint operator on that Hilbert space termed an "observable".
We start by
representing each observable by a Hermitian operator $A$. This operator will have
a complete set of (normalized) eigenvectors $|\lambda_n \rangle $  and associated eigenvalues $\lambda_n$\footnote{That is we have $A|\lambda_n \rangle= \lambda_n  |\lambda_n \rangle$.},
thus we can write $A$ in the form $A = \sum_n\lambda_n  |\lambda_n \rangle \langle \lambda_n |$.
Let's assume, for the moment and for simplicity, that all the eigenvalues are distinct.
The von Neumann description then states that if we perform a measurement of $A$ then
we will find the result of the measurements to be one of the eigenvalues and the
probability for finding any one of these is $P(\lambda_n) = |\langle \lambda_n | \psi \rangle|^2$. 
Whereas in the previous paragraph we chose the observable with $A = 0 |0\rangle \langle 0 | + 1 |1\rangle \langle 1 |$ now we can choose another observable  
$B = \lambda_1 | \lambda_1\rangle \langle \lambda_1 | + \lambda_2 |\lambda_2 \rangle \langle \lambda_2 |$. With a bit of algebra, we can verify that for $n=1, 2:$ $P(\lambda_n) = |\big\langle \lambda_n | ( \alpha |0 \rangle + \beta |1 \rangle )\big\rangle|^2 = |\alpha|^2 \langle |\lambda_n |0 \rangle |^2 + |\beta|^2 \langle |\lambda_n |1 \rangle |^2 + 2 Re \{ \alpha  \beta^* \langle \lambda_n  |0 \rangle  \langle \lambda_n  |1 \rangle^*\}$. In this case the last term, which is known as the interference term, no longer vanishes as in the case of observable $A$. Specifically, the expected utility/measurement will not be in agreement with that of a classical probability   distribution which is in state $| 0 \rangle$ with probability  $|\alpha|^2$ and in state $| 1 \rangle$ with probability  $|\beta|^2$.


\textit{From quantum states to density operators/matrices.} The probability of measuring eigenvalue $\lambda_n$ is 

\[
P(\lambda_n) = |\langle \lambda_n | \psi \rangle|^2 = \langle \lambda_n | \psi \rangle \langle \psi | \lambda_n \rangle = \langle \lambda_n | \rho | \lambda_n \rangle = \Tr ( \rho | \lambda_n \rangle  \langle \lambda_n |) = \Tr (\rho P_n )
\]

where $\rho= |\psi \rangle \langle \psi|$ is the rank-1 projection operator onto the space spanned by the state $|\psi \rangle$ and is called (probability) density of $|\psi \rangle$ and $P_n$ is the projector on the space spanned by the eigenvector\footnote{If the eigenvalues
of $A$ are degenerate, there exists a set of orthonormal eigenvectors, $|\lambda^j_n \rangle$,
which correspond to the same 
$\lambda_n$, then $P(\lambda_n) =  Tr (\rho | \sum_j |\lambda^j_n \rangle  \langle \lambda^j_n |)= Tr (\rho P_n )$, i.e., we use a projector onto the set of states with eigenvalue $| \lambda_n \rangle$.} $|\lambda_n\rangle$. Thus, the overall expected measurement is equal to\footnote{Another set of useful formulas that easily follow in the case of pure states are $P(\lambda_n) = \langle \lambda_n | P_n| \lambda_n \rangle$ and  $\langle A \rangle_\psi = \langle \lambda_n | A| \lambda_n \rangle $.}

\[
\langle A \rangle_\psi = \sum_n \lambda_n P(\lambda_n) = \sum_n \lambda_n \Tr (\rho P_n ) = \Tr (\rho \sum_n \lambda_n P_n ) = \Tr (\rho A).
\]

The set of projectors $P_n$ above have the following three properties: they are Hermitian, they are positive semi-definite operators and they are complete; they sum up to the identity.  These properties have
physical meaning. They represent, respectively, the requirements that
the projectors are observables, that they give non-negative probabilities and that the sum of the probabilities for all possible outcomes must be equal to one. 
 Generalized measurements will correspond to a collection of such projectors without necessarily being orthonormal.

\section{Omitted Proofs}
\medskip 
\begin{lemma}
\label{lem:QCCE_equiv}
    The two definitions of QCCE, namely
    \begin{equation*}
    u_i (\rho) \geq u_i ((\phi_i \otimes \mathbb{I}_{-i})(\rho))
    \tag{dev-QCCE}
\end{equation*}
for all replacement channels $\phi_i: \mathrm{L}(\H_i) \rightarrow \mathrm{L}(\H_i), \; X \mapsto \Tr(X) \rho_i'$ for some $\rho_i' \in \mathrm{L}(\H_i)$ and
\begin{equation*}
    \tag{QCCE}
		u_i (\rho) \geq u_i (\rho_i' \otimes \Tr_i \rho)
	\end{equation*}
	where $\Tr_i : \mathrm{L}(\bigotimes_{i'} \H_{i'}) \rightarrow \mathrm{L}(\bigotimes_{i' \neq i} \H_{i'})$ is the partial trace with respect to player $i$'s subsystem, are equivalent.
\end{lemma}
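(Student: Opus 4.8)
The plan is to show that the two stated families of inequalities are literally the same family, indexed in the same way, so that the equivalence is immediate. First I would observe that a replacement channel $\phi_i : \mathrm{L}(\H_i) \to \mathrm{L}(\H_i)$, $X \mapsto \Tr(X)\rho_i'$, is completely determined by its output state $\rho_i' \in \mathrm{D}(\H_i)$; this gives a bijection between replacement channels on player $i$'s subsystem and states $\rho_i' \in \mathrm{D}(\H_i)$. Hence it suffices to prove the operator identity
\[
    (\phi_i \otimes \id_{-i})(\rho) = \rho_i' \otimes \Tr_i \rho
\]
for every $\rho \in \mathrm{D}(\bigotimes_{i'} \H_{i'})$ and every such $\phi_i$. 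Applying the (linear) payoff functional $u_i(\cdot) = \Tr(R_i \,\cdot)$ to both sides then shows that the right-hand side of (dev-QCCE) and the right-hand side of (QCCE) agree term by term, and since the indexing sets coincide under the above bijection, the two defining conditions are equivalent.

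To establish the identity, I would expand $\rho = \sum_k A_k \otimes B_k$ with $A_k \in \mathrm{L}(\H_i)$ and $B_k \in \mathrm{L}(\bigotimes_{i' \neq i} \H_{i'})$ — always possible, e.g.\ by expanding in a basis of $\mathrm{L}(\H_i)$ — and then use linearity of both $\phi_i \otimes \id_{-i}$ and $\Tr_i$ together with the two defining relations $\phi_i(A_k) = \Tr(A_k)\,\rho_i'$ and $\Tr_i(A_k \otimes B_k) = \Tr(A_k)\,B_k$. This yields
\[
    (\phi_i \otimes \id_{-i})(\rho) = \sum_k \Tr(A_k)\, \rho_i' \otimes B_k = \rho_i' \otimes \sum_k \Tr(A_k)\, B_k = \rho_i' \otimes \Tr_i \rho ,
\]
as claimed.

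The whole argument is a short computation once the bijection and the key identity are in place, so there is no genuine obstacle. The only points requiring care are purely bookkeeping: keeping straight which subsystem is being replaced versus traced out (the $i$ versus $-i$ subscripts, and the implicit reordering of tensor factors), confirming that the parameter $\rho_i'$ ranges over the same set $\mathrm{D}(\H_i)$ in both formulations, and noting that $u_i$ is linear so that it distributes over the tensor decomposition without difficulty.
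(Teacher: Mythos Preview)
Your proposal is correct and follows essentially the same argument as the paper: both expand $\rho$ as a sum of elementary tensors, apply the replacement channel factorwise, and identify the result with $\rho_i' \otimes \Tr_i \rho$. Your explicit remark about the bijection between replacement channels and states $\rho_i' \in \mathrm{D}(\H_i)$ makes the equivalence of the two indexing sets clearer than the paper does, but the core computation is identical.
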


\begin{proof}
    For any joint state $\rho \in \mathrm{D}(\otimes_i \H_i)$ we can write
	\[	
		\rho = \sum_{k} X_k \otimes Y_k
	\]
	for some $X_k \in \mathrm{L}(\H_i), Y_k \in \mathrm{L}(\H_{-i})$ since $\rho \in \mathrm{L}(\otimes_i \H_i) = \otimes_i \mathrm{L}(\H_i)$. Then when $\phi_i$ is the replacement channel $\phi_i: \mathrm{L}(\H_i) \rightarrow \mathrm{L}(\H_i), \; X \mapsto \Tr(X) \rho_i'$,
	\begin{equation*}
	\begin{split}
		(\phi_i \otimes \mathbb{I}_{-i}) (\rho)
		= \sum_k (\phi_i \otimes \mathbb{I}_{-i})(X_k \otimes Y_k) 
		= \sum_k \phi_i(X_k) \otimes Y_k 
		&= \left(\sum_k \Tr(X_k) \rho_i' \right) \otimes Y_k \\
		&= \rho_i' \otimes \sum_k \Tr(X_k)Y_k \\
		&= \rho_i' \otimes \Tr_i \rho.
	\end{split}
	\end{equation*}
\end{proof}

\medskip
\begin{lemma}
\label{lem:_sup_unifcont_cont} 
	Let $h(x,y): X \times Y \rightarrow \R$ be a continuous function on the product of compact sets $X, Y$. Then $z: X \rightarrow \R$, $z(x) = \sup_{y \in Y} h(x,y)$ is continuous.
\end{lemma}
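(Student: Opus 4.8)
The plan is to exploit the compactness of $X \times Y$ to upgrade continuity of $h$ to \emph{uniform} continuity, and then transfer the resulting modulus of continuity directly to $z$. First I would observe that since $X$ and $Y$ are compact, the product $X \times Y$ is compact, so the continuous function $h$ is uniformly continuous on it. Fixing $\epsilon > 0$, there is therefore a $\delta > 0$ such that $|h(x,y) - h(x',y')| < \epsilon$ whenever $(x,y)$ and $(x',y')$ are within $\delta$ in the product metric; in particular, specializing to $y' = y$, we get $|h(x,y) - h(x',y)| < \epsilon$ \emph{for all} $y \in Y$ whenever $d(x,x') < \delta$.

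Next I would check that the supremum defining $z$ is attained and finite: for each fixed $x$, the map $y \mapsto h(x,y)$ is continuous on the compact set $Y$, so it attains its maximum at some $y_x \in Y$, giving $z(x) = h(x,y_x) \in \R$. This also shows $z$ is well-defined and real-valued.

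Then, for $x, x'$ with $d(x,x') < \delta$, I would estimate
\[
z(x) = h(x, y_x) \leq h(x', y_x) + \epsilon \leq \sup_{y \in Y} h(x', y) + \epsilon = z(x') + \epsilon,
\]
using the uniform-continuity bound in the first inequality and the definition of the supremum in the second. Swapping the roles of $x$ and $x'$ gives $z(x') \leq z(x) + \epsilon$, hence $|z(x) - z(x')| \leq \epsilon$ whenever $d(x,x') < \delta$. Since $\epsilon > 0$ was arbitrary, $z$ is (in fact uniformly) continuous on $X$, which completes the proof.

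The argument is entirely routine; the only point worth flagging is that one genuinely needs \emph{joint} continuity of $h$ together with compactness (separate continuity in each variable would not suffice), and that attainment of the supremum on the compact set $Y$ is what lets the uniform bound be applied at the maximizing point $y_x$ rather than only along an approximating sequence.
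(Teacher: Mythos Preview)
Your proof is correct and follows essentially the same route as the paper's: both pass to uniform continuity of $h$ on the compact product $X \times Y$, specialize to the bound $|h(x,y)-h(x',y)|<\epsilon$ uniformly in $y$, and then compare $z(x)$ and $z(x')$ via this one-sided estimate and a symmetric swap. The only cosmetic difference is that you invoke attainment of the supremum at some $y_x$ whereas the paper simply takes the supremum over $y$ on both sides of $h(x,y)\le h(x',y)+\epsilon$; neither variant requires anything the other does not.
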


\begin{proof}
	Since $h$ is a continuous function on a compact domain, it is uniformly continuous. In particular, given any $\epsilon > 0$ $\exists \delta > 0 $ such that $\forall y$, $\forall \norm{x - x'} < \delta$, $\abs{h(x,y) - h(x',y)} < \epsilon$.
	
	Then  $\forall y$, $\forall \norm{x - x'} < \delta$ we have that $h(x,y) \leq h(x',y) + \epsilon \leq z(x') + \epsilon$, which taking supremum over $y \in Y$ on both sides gives us that $z(x) \leq z(x') + \epsilon$ $\forall \norm{x - x'} < \delta$.
	
	On the other hand, we have similarly that $\forall y$, $\forall \norm{x - x'} < \delta$ $h(x'y) \leq h(x,y) + \epsilon \leq z(x) + \epsilon$, so similarly taking supremum over $y$ on both sides gives us that $z(x') \leq z(x) + \epsilon$ $\forall \norm{x - x'} < \delta$.
	
	Combining the last two results gives us that $\abs{z(x') - z(x)} < \epsilon$ $\forall \norm{x - x'} < \delta$. Thus $z$ is continuous.
\end{proof}

\end{document}